\newcommand{\edit}[1]{\textcolor{red}{[#1]}}
\newcommand\set[1]{{ \{\, #1 \,\} }}
\newcommand\ceil[1]{\left\lceil#1\right\rceil}
\newcommand\floor[1]{\left\lfloor#1\right\rfloor}
\newcommand{\sub}{\subseteq}
\newcommand{\ria}{\rightarrow}
\newcommand{\CC}{\mathcal{C}}
\newcommand{\CS}{\mathcal{S}}
\newcommand{\CR}{\mathcal{R}}
\newcommand{\CV}{\mathcal{V}}
\newcommand{\CT}{\mathcal{T}}
\newcommand{\NN}{\mathbb{N}}
\newcommand{\RR}{\mathbb{R}}
\newcommand{\EE}{\mathbb{E}}
\newcommand{\PP}{\mathbb{P}}
\newcommand{\Core}{\textup{Core}}
\newtheorem{theorem}{Theorem}
\newtheorem{proposition}[theorem]{Proposition}
\newtheorem{lemma}[theorem]{Lemma}
\newtheorem{problem}[theorem]{Open Problem}
\newtheorem{corollary}[theorem]{Corollary}
\declaretheorem[style=definition,sibling=theorem,qed=$\blacksquare$]{definition}
\declaretheorem[style=definition,sibling=theorem,qed=$\blacksquare$,name=Example]{example}
\title{Computing the proportional veto core}
\newcommand{\orcid}[1]{\href{https://orcid.org/#1}{\includeinkscape[inkscapeformat=pdf,width=10pt]{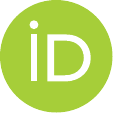}}}
\author{Egor Ianovski\orcid{0000-0001-9411-2529}}
\author{Aleksei Y. Kondratev\orcid{0000-0002-8424-8198}}
\affil{
    HSE University, Saint Petersburg, Russia \\
    george.ianovski@gmail.com, kondratev.aleksei@yandex.ru
}
\begin{document}

\maketitle

\begin{abstract}
In social choice there often arises a conflict between the majority principle (the search for a candidate that is as good as possible for as many voters as possible), and the protection of minority rights (choosing a candidate that is not overly bad for particular individuals or groups). In a context where the latter is our main concern, veto-based rules -- giving individuals or groups the ability to strike off certain candidates from the list -- are a natural and effective way of ensuring that no minority is left with an outcome they find untenable. However, such rules often fail to be anonymous, or impose specific restrictions on the number of voters and candidates. These issues can be addressed by considering the proportional veto core -- the solution to a cooperative game where every coalition is given the power to veto a number of candidates proportional to its size. However, the na\"ive algorithm for the veto core is exponential, and the only known rule for selecting from the core, with an arbitrary number of voters, fails anonymity. In this paper we present a polynomial time algorithm for computing the core, study its expected size, and present an anonymous rule for selecting a candidate from it. We study the properties of core-consistent voting rules. Finally, we show that a pessimist can manipulate the core in polynomial time, while an optimist cannot manipulate it at all.
\end{abstract}

\thispagestyle{empty}

\section{Introduction}

Suppose we have a society of 100 individuals who express the following preferences over alternatives $a,b,$ and $c$:
\begin{itemize}
    \item 60 individuals report $a\succ b\succ c$.
    \item 40 individuals report $b\succ c\succ a$.
\end{itemize}
Which alternative should the society adopt? It is reasonable to exclude $c$ from the running, as it is Pareto-dominated by $b$, but both $a$ and $b$ have good arguments to support them, and it is difficult to choose one over another without knowing what this society is and what they are electing. In a political context a lot can be said in favour of $a$. As a president, $b$'s initiatives would be constantly challenged by 60\% of the public, and in a parliamentary system prime minister $b$ would find it very difficult to achieve anything at all. On the other hand, if $a,b$, and $c$ are meeting times, then presumably most people would be able to attend $b$, while with $a$ the hall will be half-empty. As government budgets, $a$ will write off the interests of nearly half the population. As an arbitrated settlement to an armed dispute, $a$ is unlikely to stop the violence. 

Should we decide that in our context the minority matters,
we need to study the profile from the bottom up. Rather than asking how many individuals would love to see $a$ elected, the real issue is to how many such an outcome would be unacceptable. A line of research starting with \citet{Mueller1978} considered voting rules that endowed groups or individuals with the power to unilaterally block certain outcomes from being elected, regardless of how desirable these outcomes may be to the rest of the population.

\citet{Mueller1978} introduced the procedure of \emph{voting by veto} to select a public good, with the explicit aim of protecting an individual from unfair treatment. The alternatives consist of one proposal from each voter, as well as the status quo. An order is formed over the voters and each, in turn, strikes off the alternative they like the least. Exactly one outcome is selected by this procedure, and clearly no voter will see their worst outcome elected. But this procedure is not anonymous, and requires that the number of alternatives be one greater than the voters.


\citet{Moulin1981} extended the concept of voting by veto from individuals to coalitions, and studied the core of the resulting cooperative game. The core is anonymous and is well defined for any number of alternatives, but in general this concept is problematic as the core could be empty. However \citeauthor{Moulin1981} has shown that endowing each coalition with the power to veto a number of alternatives proportional to the coalition's size guarantees that the core is non-empty, and is the smallest possible core out of all such rules. \citeauthor{Moulin1981} also proposed \emph{voting by veto tokens} for selecting a specific outcome from this core, however while the core as a concept is anonymous, voting by veto tokens returns to the problem of different voting orders leading to different outcomes.

In the case of two voters, the veto core corresponds to the Pareto-optimal candidates ranked in the top half of both voters' ballots. It turns out that nearly every rule used in this literature selects from the core (for recent developments see \citet{BarberaCoelho22},
\citet{BolLaslierNunez22},
\citet{LaslierNunezSanver21} and the references therein). However, many of these rules are not anonymous, and it is not clear whether they can be extended to an arbitrary number of voters in a reasonable way.

In recent years, voting by veto has been rediscovered in a computational setting. 
\citet{Bouveret2017} study the sequential elimination rule, which corresponds to voting by integer veto in the earlier literature \citep[p.~135]{Moulin1983book}.  However their motivation is markedly different from \citet{Mueller1978} -- they emphasise the low communication complexity of the rule, rather than the guarantees it gives to minorities. Indeed, the main focus of the paper is finding an elimination order that best approximates the Borda winner -- and the Borda rule selects the alternative that maximises the average rank, regardless of how many individuals end up with their worst outcome. \citet{KondratevNesterov2020} return to the original motivation, and study the veto core in the context of balancing the rights of minorities and majorities. Their main criticism of the rule is that it is unclear how the core is to be computed -- the na\"ive algorithm is exponential.

\subsection{Our contribution}
We demonstrate the existence of a polynomial-time algorithm for the veto core (\autoref{theorem:corepolynomial}), and study the expected size of the core under Impartial Culture (\autoref{prop:sizecore}, \autoref{prop:sizecore2}) We propose veto by consumption, an anonymous and neutral rule to select from the core, as well as families of rules based on restricting existing voting/ranking rules on candidates known to be in the core (Section~\ref{sec:selecting}), and study the properties of these methods of selecting a winner   (Section~\ref{sec:properties}). Finally, we demonstrate that in the framework of Duggan-Schwartz \citep{Duggan1992,Duggan2000} a pessimist can manipulate the core in polynomial time (\autoref{theorem:pessimist}), while an optimist cannot manipulate it at all (\autoref{thm:strategyproofoptimist}). The conference version of this paper appeared in \citet{IanovskiKondratev21}.

\section{Preliminaries}

We operate in the standard voting model. A voting situation consists of a set $\CV$ of $n$ \emph{voters}, a set $\CC$ of $m$ \emph{candidates}. We assume $\CV$ and $\CC$ are drawn from some well-ordered set such as the natural numbers; in particular, it is always possible to order $\CV$ and $\CC$ from first to last. Every voter is associated with a linear order over the candidates, which we term the voter's \emph{preferences}. We use $\succ_i$ to denote the preference order of voter $i$. When brevity demands it, we may refer to a voter's preferences as his \emph{type}, and use $abc$ to denote the type of a voter with preference order $a\succ b\succ c$. An $n$-tuple of preferences/types is called a \emph{profile}. A mapping that takes a profile to one or more candidates is called a \emph{voting rule}.

A voting rule $\varphi$ is \emph{anonymous} if $\varphi(P)=\varphi(\pi P)$ for each permutation of the voters $\pi$ and each profile~$P$. It is \emph{neutral} if $\varphi(\tau P) = \tau \varphi(P)$ for each permutation of the candidates $\tau$ and each~$P$.

In this paper we are interested in voting rules based on the principle of veto power.

\begin{definition}[\citealp{Moulin1981}]\label{def:vetocore}
A \emph{veto function} is a mapping $v:2^\CV\ria\NN$. Intuitively, $v(T)$ is the number of candidates a coalition of voters $T$ can veto. We call $v(T)$ the \emph{veto power} of $T$.

The \emph{proportional veto function} is the veto function given by:
$$v(T)=\ceil{m\frac{|T|}{n}}-1.$$
A candidate $c$ is \emph{blocked} by a coalition $T$ if there exists a \emph{blocking set} of candidates, $B$, such that:
\begin{align}
    &\forall b\in B,\forall i\in T:\ b\succ_ic,\\
    &m-|B|\leq v(T)
\end{align}
Intuitively, condition (1) means that every voter in $T$ considers every candidate in $B$ to be better than $c$, and condition (2) means that the coalition $T$ can guarantee that the winner will be among $B$ by vetoing all the other candidates.

The set of all candidates that are not blocked with the proportional veto function is called the \emph{veto core}.\footnote{Strictly speaking this is the \emph{proportional veto core}, since a veto core is defined for any veto function. However, outside of \autoref{app:tokens}, the proportional veto core is the only veto core we will consider in this paper, so we will use the shorter term. Within \autoref{app:tokens}, we will explicitly refer to each veto core by its full name.} We shall use $\Core(P)$ to denote the veto core of $P$.
\end{definition}

\begin{theorem}[{\citealp[p. 411]{Moulin1981}}]
For any profile $P$, $\Core(P)$ is non-empty.
\end{theorem}

Note that while the veto core is formulated as a solution concept to a cooperative game, it is also a mapping that takes a profile of preferences to a non-empty subset of candidates -- in other words, it is also a voting rule. As a voting rule it is clearly anonymous and neutral; other properties of the veto core we will explore in Section~\ref{sec:selecting}.

\begin{example}
Consider a profile with five candidates and four voters with the following preferences:
\begin{itemize}
    \item $e\succ_1 b\succ_1 c\succ_1 d\succ_1 a$.
    \item $b\succ_2 e\succ_2 c\succ_2 d\succ_2 a$.
    \item $d\succ_3 b\succ_3 e\succ_3 c\succ_3 a$.
    \item $a\succ_4 c\succ_4 d\succ_4 e\succ_4 b$.
\end{itemize}
In the case of $m=n+1$ the veto power of a coalition simplifies to $\ceil{(n+1)\frac{k}{n}}-1=k$. In other words, $k$ voters can veto exactly $k$ candidates.

Candidate $a$ is blocked by the singleton coalition $\set{1}$ (among others), with the blocking set $B=\set{b,c,d,e}$; $b$ is blocked by $\set{4}$, with $B=\set{a,c,d,e}$; $d$ is not blocked by any singletons, but is blocked by $\set{1,2}$, with $B=\set{b,c,e}$; $c$ is blocked by the coalition $\set{1,2,3}$, with $B=\set{b,e}$. Thus the unique candidate in the veto core is~$e$.

Now let us add a fifth voter:
\begin{itemize}
    \item $c\succ_5 a\succ_5 d\succ_5 b\succ_5 e$.
\end{itemize}
With $m=n$, the veto power simplifies to $\ceil{n\frac{k}{n}}-1=k-1$. A singleton coalition can no longer block anything, but the coalition $\set{1,2}$ can block $a$ with $B=\set{b,c,d,e}$. However, no other candidate is blocked, so the veto core is $\set{b,c,d,e}$.
\end{example}

\section{Computing the core}

The veto core is a solution to a cooperative game, and a na\"ive approach would have us enumerate all possible coalitions. In this section we will show that it is possible to reduce this problem to detecting a biclique in a bipartite graph. To do this, we will need a lemma of \citet{Moulin1981}, and a standard result of algorithmics. Omitted proofs can be found in \autoref{app:proofs} unless stated otherwise.

\begin{restatable}{lemma}{moulin}\textup{(\citealp[p. 409]{Moulin1981})}\label{lem:moulin}
Let $T$ be a coalition of size $k$, and $r,t$ be coefficients satisfying $rn=tm-\alpha$, where $\alpha=\gcd(m,n)$ is the greatest common divisor of $m$ and $n$, and $t>\alpha n$. The veto power of $T$ then satisfies:
$$v(T)=\ceil{m\frac{k}{n}}-1=\floor{\frac{rk}{t}}.$$
\end{restatable}

\begin{restatable}{proposition}{bicliquesize}\textup{(\citealp[p. 196]{Garey1979book})}\label{prop:biclique}
Let $G$ be a bipartite graph, and $k$ an integer. We can, in polynomial time, determine whether there exists a biclique $K_{x,y}\sub G$ with $x$ vertices on the left and $y$ vertices on the right, such that $x+y\geq k$.
\end{restatable}

\begin{samepage}
\begin{theorem}\label{theorem:corepolynomial}
The veto core can be computed in polynomial time.
\end{theorem}
\begin{proof}
We shall show that we can check whether a fixed candidate $c$ is blocked in polynomial time -- we can then find the veto core by running the algorithm below on every candidate.

Consider a profile with $n$ voters and $m$ candidates. Using the Extended Euclidean algorithm we can find $r',t'$ such that $r'n=t'm-\alpha$, where $\alpha=\gcd(m,n)$, and $|t'|\leq\frac{n}{\alpha}$ in $O(\min(\log m,\log n))$ time. It is easy to see that with $(r,t)=(r'+3\alpha m,t'+3\alpha n)$, we have $rn=tm-\alpha$ and $t\geq 3\alpha n-\frac{n}{\alpha}>\alpha n$. By Lemma~\ref{lem:moulin}, the veto power of a coalition of size $k$ in this instance is $\floor{\frac{rk}{t}}$.

Construct a bipartite graph $G$ with $rn$ vertices on the left and $t(m-1)$ vertices on the right. Every voter is associated with $r$ vertices on the left and every candidate but $c$ with $t$ vertices on the right. Add an edge between every voter vertex and every candidate vertex that voter considers to be better than~$c$. Call this the \emph{blocking graph}. We claim that there exists a biclique $K_{x,y}\subseteq G$ with $x+y\geq tm$ if and only if $c$ is blocked.

Suppose that $c$ is blocked by coalition $T$ of size $k$. This means there exists a $B$, $|B|= m-\floor{\frac{rk}{t}}$, such that each voter in $T$ prefers everything in $B$ to $c$. Observe that the vertices associated with $T$ and $B$ form a biclique with $rk + tm - t\floor{\frac{rk}{t}}$ vertices. Since $\frac{rk}{t}\geq\floor{\frac{rk}{t}}$, it follows that $rk\geq t\floor{\frac{rk}{t}}$ and the biclique has at least $tm$ vertices.

Now suppose there exists a biclique $K_{x,y}\subseteq G$ with $x+y\geq tm$. Without loss of generality, we can assume $x=rk',y=tb'$ (since if voter $i$ considers a candidate $a$ to be better than $c$, then all $r$ vertices associated with $i$ are adjacent to all $t$ vertices associated with $a$). That is, there are $k'$ voters who all agree that $b'$ candidates are better than $c$. We must show that the coalition has enough veto power to force this outcome: $\floor{\frac{rk'}{t}}\geq m-b'$. We know that:
\begin{align*}
rk'+tb'&\geq tm\\
rk'&\geq tm-tb'\\
 \frac{rk'}{t}&\geq m-b'.
\end{align*}
Since $m-b'$ is an integer it follows that $\floor{\frac{rk'}{t}}\geq m-b'$.

By Proposition~\ref{prop:biclique} we can, in polynomial time, determine whether such a biclique exists. This is sufficient to establish the problem is solvable in polynomial time. However, in order to use the argument in Proposition~\ref{prop:biclique} we need to find the size of the maximum matching in the complement of the blocking graph, $\overline{G}$. If we use a standard $O(\text{Edges}\sqrt{\text{Vertices}})$ solution our algorithm will run in $O(\alpha^2 m^3n^2\sqrt{\alpha mn})$ time -- a heroic $O(n^{8.5})$ when $m=n=\alpha$. We can do better by reformulating the problem as one of maximum flow.

Define the \emph{flow graph} of $c$ to be the directed weighted graph with a source node $S$, a sink node $U$, a node for every voter, and a node for every candidate but $c$. There is an arc from $S$ to each voter with capacity $r$, an arc from each candidate node to $U$ with capacity $t$, and an arc of unbounded capacity between every voter node and every candidate node that the voter considers to be worse than~$c$.

We will show that if $\overline{G}$ has a matching of size $F$ then the flow graph has a flow of at least $F$, and if the flow graph has a maximum flow of $F$ then $\overline{G}$ has a matching of size at least $F$. Combining the two, the size of the maximum matching in $\overline{G}$ is equal to the maximum flow in the flow graph.

Suppose $\overline{G}$ has a matching $M$ of size $F$. For each edge $(i',d')$ in $M$, where $i'$ is a vertex of voter $i$ and $d'$ is a vertex of candidate $d$, increase the flow in $(S,i)$, $(i,d)$, and $(d,U)$ by one. Since there are at most $r$ and $t$ edges in the matching incident on a vertex of $i$ and $d$ respectively, we have enough capacity in $(S,i)$ and $(d,U)$ to perform this operation for every edge in $M$, and thus construct a flow of size $F$.

Now suppose there is a maximum flow of $F$ in the flow graph. By the Integral Flow Theorem, we can assume that this flow is integral. For every unit of flow from $i$ to $d$, match one vertex of $i$ in $\overline{G}$ with one vertex of $d$. Since the inflow of $i$ is at most $r$, we have enough vertices of $i$ in $\overline{G}$, and since the outflow of $d$ is at most $t$, we have enough vertices of $d$.

Since we can solve the maximum flow problem in cubic time \citep{Malhotra1978}, this gives us an $O(m\max(n^3,m^3))$ solution to the veto core.
\end{proof}
\end{samepage}

\section{Selecting from the core}\label{sec:selecting}

\subsection{The size of the core}

Proportional veto power guarantees that the veto core is non-empty. Moreover, any anonymous veto function with a non-empty core must include the proportional veto core \citep[p. 411]{Moulin1981}; it is thus the smallest solution possible. In general, however, it is not a singleton. In fact, with the Impartial Culture assumption, where all preference orders are generated uniformly and independently, the veto core tends to be unreasonably large (Table~\ref{tab:coresize}). Indeed, in the case of a large number of voters and a small number of candidates, we can demonstrate that the veto core will contain half the candidates on average.

\begin{restatable}{proposition}{ICcoresize}\label{prop:sizecore}
With the Impartial Culture assumption and a fixed number of candidates $m\geq 2$, as $n\ria\infty$, the following are true:
\begin{enumerate}
    \item With probability approaching one, the veto core consists of the candidates which are ranked last by less than $\frac{n}{m}$ voters;
    \item The expected size of the veto core goes to half the candidates.
\end{enumerate}
\end{restatable}

The case of a fixed number of voters and a large number of candidates may not seem as striking in \autoref{tab:coresize}, but here, too, the size of the veto core is unbounded.

\begin{restatable}{proposition}{ICcoresizefixedn}\label{prop:sizecore2}
With the Impartial Culture assumption and a fixed number of voters $n\geq 2$, as $m\ria\infty$, the expected size of the veto core goes to infinity.
\end{restatable}

\begin{table*}
\centering
\begin{tabular}{r | c c c c c c c c c c c c c}
& $m=2$ & 3 & 4 & 5 & 6 & 7 & 8 & 9 & 10& 11& 100& 101& 200\\
\hline
$n=2$ & 0.75 & 0.39 & 0.41 & 0.28 & 0.28 & 0.22 & 0.23 & 0.20 & 0.20& 0.17
&0.04& 0.04& 0.02\\
3 & 0.50 & 0.69 & 0.35 & 0.33 & 0.40 & 0.28 & 0.27 & 0.31 & 0.25& 0.24
&0.09& 0.09& 0.06\\
4 & 0.68 & 0.45 & 0.66 & 0.35 & 0.37 & 0.35 & 0.42 & 0.30 & 0.32& 0.31
&0.15& 0.15& 0.11\\
5 & 0.50 & 0.42 & 0.47 & 0.66 & 0.36 & 0.37 & 0.36 & 0.39 & 0.44& 0.31
&0.21& 0.20& 0.17\\
6 & 0.65 & 0.63 & 0.48 & 0.50 & 0.67 & 0.35 & 0.39 & 0.41 & 0.40& 0.41
&0.26& 0.26& 0.22\\
7 & 0.50 & 0.48 & 0.41 & 0.47 & 0.54 & 0.68 & 0.35 & 0.39 & 0.41& 0.39
&0.29& 0.29 & 0.27\\
8 & 0.63 & 0.44 & 0.64 & 0.46 & 0.52 & 0.58 & 0.70 & 0.36 & 0.40& 0.41
&0.32& 0.32& 0.30\\
9 & 0.50 & 0.63 & 0.53 & 0.41 & 0.52 & 0.54 & 0.60 & 0.71 & 0.36& 0.40
&0.33& 0.33& 0.32\\
10 & 0.62 & 0.51 & 0.51 & 0.65 & 0.47 & 0.50 & 0.57 & 0.63 & 0.71& 0.36
&0.37& 0.34& 0.35\\
11 & 0.50 & 0.45 & 0.44 & 0.58 & 0.42 & 0.49 & 0.54 & 0.60 & 0.64 & 0.71
&0.35 & 0.35 & 0.34\\
100 & 0.54 & 0.53 & 0.55 & 0.57 & 0.49 & 0.54 & 0.52 & 0.57 & 0.58& 0.57
&0.73& 0.37& 0.55\\
101 & 0.50 & 0.50 & 0.53 & 0.53 & 0.48 & 0.52 & 0.50 & 0.55 & 0.57& 0.56
&0.73& 0.74& 0.55\\
200 & 0.52 & 0.49 & 0.54 & 0.54 & 0.52 & 0.50 & 0.55 & 0.54 & 0.56& 0.54
&0.68& 0.41& 0.73\\
\end{tabular}
\caption{Proportion of candidates in the veto core, average of 1,000 IC profiles.}\label{tab:coresize}
\end{table*}

\subsection{The world of core-consistency}

We have seen above that the veto core can be large, and thus the space of voting rules that select from the veto core is vast. In the remainder of this section we will explore this family of voting rules.

We first wish to argue that this family is something fundamentally novel, and the ethical principles incorporated in core-consistency are not simple corollaries of, and in fact are often mutually exclusive with, standard desiderata commonly studied in voting theory. As a first step we will show that the class of core-consistent voting rules is disjoint from two of the most famous principles in voting: the utilitarian principle of selecting the candidate with the highest overall ``quality'', usually traced to Borda and Laplace; and the majoritarian principle of respecting the will of the majority of the voters, associated with the work of Condorcet \citep[part~II]{Black58book}. A~comprehensive overview of voting rules and their properties can be found in \citet{FelsenthalNurmi18book}.

\begin{definition}\label{def:propertiesvoting}
A candidate $c$ is a \emph{majority winner} in a profile $P$ if it is ranked first by over half the voters. 

A voting rule $\varphi$ is \emph{majority-consistent} if for each $P$ where $c$ is a majority winner we have $\varphi(P)=c$. This includes the class of Condorcet methods as well as plurality, single transferable vote (STV), and Bucklin.

A candidate $a$ \emph{permutedly dominates} $b$ in $P$ if $a$ has at least as many first positions, first + second positions, first + second + third positions, etc, as $b$, and at least one of these inequalities is strict \citep[p.~166]{Fishburn71,Fishburn73book}. If all of these inequalities are strict, then $a$ \emph{strictly permutedly dominates}~$b$ \citep{Fishburn82}.

A voting rule $\varphi$ is \emph{positional-consistent} if for each $P$ where $a$ permutedly dominates $b$ and $b\in \varphi(P)$ we have $a\in \varphi(P)$. Positional-consistent rules include rank dependent scoring rules \citep{Goldsmith14}, and median-based rules such as Bucklin and convex median \citep{KondratevNesterov2020}.


A voting rule $\varphi$ is \emph{core-consistent} if $\varphi(P)\subseteq\Core(P)$ for each~$P$. By our definition the veto core itself is a core-consistent voting rule, albeit one with an unreasonably high rate of ties. Other examples we shall meet later in this section.
\end{definition}

\begin{proposition}\label{CoreCondorcetDisjoint}
Core-consistent and majority-consistent voting rules are disjoint.
\end{proposition}
\begin{proof}
Consider the profile:
\begin{itemize}
\item $a\succ b\succ c\succ d$.
    \item $a\succ b\succ c\succ d$.
    \item $b\succ c\succ d\succ a$. 
\end{itemize}
Here, $a$ is the majority winner, but the veto core is $\set{b}$.
\end{proof}

\begin{proposition}\label{CorePositionalDisjoint}
Core-consistent and positional-consistent voting rules are disjoint.
\end{proposition}
\begin{proof}
Consider the profile:
\begin{itemize}
    \item $a\succ d\succ e\succ f\succ b\succ c$.
    \item $a\succ c\succ e\succ f\succ b\succ d$.
    \item $a\succ c\succ d\succ f\succ b\succ e$.
    \item $b\succ c\succ e\succ d\succ a\succ f$.
    \item $b\succ c\succ e\succ d\succ a\succ f$.
\end{itemize}
Observe that $a$ and $b$ are always ranked either first or fifth, $a$ has 3 first places, $b$ has 2, $a$ has 5 first and fifth places, $b$ has 5. Hence, $b$ should not be a unique winner under a positional-consistent rule. However, $c,d,e,f$ are blocked by singletons and $a$ is blocked by the last two voters, so the veto core is $\set{b}$.
\end{proof}

If we consider utilitarianism and majoritarianism important, then this is not a good start, but we can console ourselves that every core-consistent rule at least satisfies weaker notions of efficiency and majority.

\begin{definition}
A candidate $c$ is a \emph{majority loser} in a profile~$P$ if it is ranked last by more than half the voters. A voting rule $\varphi$ satisfies the \emph{majority loser criterion} if for each $P$ where $c$ is a majority loser we have $c\notin \varphi(P)$.

A candidate $c$ \emph{Pareto-dominates} $a$ in $P$ if $c\succ_i a$ for every voter $i$. If a candidate is undominated in~$P$, we say that it is \emph{Pareto-optimal}. A voting rule is \emph{Pareto-efficient} if every election winner is a Pareto-optimal candidate.
\end{definition}

\begin{proposition}
Every core-consistent voting rule satisfies:
\begin{enumerate}
    \item Majority loser criterion (indeed, such a rule will never elect a candidate ranked last by more than $n/m$ voters).
    \item Pareto-efficiency.
\end{enumerate}
\end{proposition}
\begin{proof}
\begin{enumerate}
    \item Majority loser criterion.
    
    Suppose $c$ is ranked last by more than $n/m$ the voters. The coalition that ranks $c$ last has a veto power of at least $\ceil{m\frac{\floor{n/m} + 1}{n}} - 1\geq\ceil{1 + \epsilon} - 1\geq 1$, so at the very least they can guarantee that $c$ is blocked.
    
    \item Pareto efficiency.
    
    The grand coalition will block every Pareto-dominated candidate, so the veto core consists only of Pareto-optimal candidates.
\end{enumerate}
\end{proof}

Unfortunately, there are a lot of other properties that are inconsistent with core-consistency.

\begin{definition}
For a profile $P$, a set $C$ of two or more candidates is a set of clones if $a,c\in C$ and $a\succ_i b\succ_i c$ for a voter $i$ imply that $b\in C$. Denote $P_{-c}$ the profile restricted to all candidates but~$c$. A voting rule $\varphi$ satisfies \emph{independence of clones} if for each $P$, $C$, $c\in C$, and $a\notin C$, removing clones does not change the election outcome \citep{Tideman1987}:
\begin{align*}
    a\in \varphi(P)\quad&\text{if and only if}\quad a\in \varphi(P_{-c});\\
    C\cap \varphi(P)\neq\emptyset\quad&\text{if and only if}\quad 
    \set{C\setminus c}\cap \varphi(P_{-c})\neq\emptyset.
\end{align*}

Recall that a candidate $c$ is Pareto-dominated by $a$ in a profile~$P$ if $a\succ_i c$ for every voter~$i$. A~voting rule satisfies \emph{independence of Pareto-dominated alternatives} if removing Pareto-dominated candidates does not change the election outcome \citep[p.~288]{LuceRaiffa57book}.\footnote{Independence of Pareto-dominated alternatives is called reduction principle/condition by \citet[p.~148]{Fishburn71,Fishburn73book}.}

Let $P'$ and $P''$ be profiles with disjoint voters such that for a voting rule $\varphi$, $\varphi(P')\cap \varphi(P'')\neq\emptyset$. Let $P$ be obtained by adding the voters in $P'$ to $P''$. We say that $\varphi$ satisfies \emph{electoral consistency} if $\varphi(P)=\varphi(P')\cap \varphi(P'')$ \citep{Young75}.

Let $P$ be a profile and $P'$ be obtained by reversing the preference order of every voter in~$P$. A~voting rule $\varphi$ satisfies \emph{duality} if $\varphi(P)\cap \varphi(P')$ is either empty or consists of all candidates \citep[p.~157]{Fishburn73book}.

A \emph{Condorcet loser} in $P$ is a candidate $c$ such that for any other candidate $a$, the majority of voters prefer $a$ to $c$. A voting rule $\varphi$ satisfies the \emph{Condorcet loser criterion} if for each $P$ where $c$ is a Condorcet loser we have $c\notin \varphi(P)$.
\end{definition}

\begin{proposition}
No core-consistent voting rule satisfies:
\begin{enumerate}
    \item Independence of clones.
    \item Independence of Pareto-dominated alternatives.
    \item Electoral consistency.
    \item Duality.
    \item Condorcet loser criterion.
\end{enumerate}
\end{proposition}
\begin{proof}

\begin{enumerate}
    \item Independence of clones.
    
    Consider the profiles:
    
    \begin{tabular}{ccc}
         \parbox{0.3\textwidth}{
         \begin{itemize}
             \item $a\succ b$.
             \item $b\succ a$.
         \end{itemize}}&  
         \parbox{0.3\textwidth}{
         \begin{itemize}
             \item $a\succ b\succ c$.
             \item $b\succ c\succ a$.
         \end{itemize}}
         &  
         \parbox{0.3\textwidth}{
         \begin{itemize}
             \item $a\succ c\succ b$.
             \item $b\succ a\succ c$.
         \end{itemize}}
    \end{tabular}
    On the left, the veto core is $\set{a,b}$. In the middle, $\set{b}$. On the right, $\set{a}$. Suppose the winner on the left is unique. Without loss of generality, let it be $a$. However, the profile in the middle is obtained by introducing a clone of $b$ ($c$), and the winner is now $b$. 
    
    Now suppose the outcome on the left is the tie $\set{a,b}$. Again, going to the profile in the middle we introduce a clone and change the outcome of the election.
\item Independence of Pareto-dominated alternatives.

In the above example, $c$ is also a Pareto-dominated candidate.

\item Electoral consistency

Consider the following two profiles:

\begin{tabular}{ll}
     \parbox{0.4\textwidth}{\begin{itemize}
         \item $c\succ a\succ b\succ d$.
         \item $c\succ a\succ b\succ d$.
         \item $c\succ a\succ d\succ b$.
         \item $c\succ a\succ d\succ b$.
         \item $b\succ d\succ a\succ c$.
     \end{itemize}}
     &  \parbox{0.4\textwidth}{
     \begin{itemize}
         \item $a\succ c\succ b\succ d$.
         \item $a\succ c\succ b\succ d$.
         \item $a\succ c\succ d\succ b$.
         \item $a\succ c\succ d\succ b$.
         \item $d\succ c\succ b\succ a$.
         \item $\mathbf{d\succ c\succ b\succ a}$.
         \item $\mathbf{b\succ d\succ a\succ c}$.
     \end{itemize}}
\end{tabular}

On the left, a coalition of size four has veto power three, so the four voters ranking $c$ first guarantee the veto core is $\set{c}$. On the right, a coalition of size two has veto power one, so $d,b,a$ are blocked and the veto core is $\set{c}$. In the combined profile, any core-consistent, electorally-consistent rule must choose only $c$.

Now consider the following two profiles, obtained from the previous example by moving the voters in boldface left:

\begin{tabular}{ll}
     \parbox{0.4\textwidth}{\begin{itemize}
         \item $c\succ a\succ b\succ d$.
         \item $c\succ a\succ b\succ d$.
         \item $c\succ a\succ d\succ b$.
         \item $c\succ a\succ d\succ b$.
         \item $b\succ d\succ a\succ c$.
         \item $\mathbf{d\succ c\succ b\succ a}$.
         \item $\mathbf{b\succ d\succ a\succ c}$.
     \end{itemize}}
     &  \parbox{0.4\textwidth}{
     \begin{itemize}
         \item $a\succ c\succ b\succ d$.
         \item $a\succ c\succ b\succ d$.
         \item $a\succ c\succ d\succ b$.
         \item $a\succ c\succ d\succ b$.
         \item $d\succ c\succ b\succ a$.
     \end{itemize}}
\end{tabular}

On the right, the four voters ranking $a$ first guarantee the veto core is $\set{a}$. On the left, two voters each block $d,b,c$ and the veto core is $\set{a}$. Any core-consistent, electorally-consistent rule must choose only $a$ from the combined profile, but that is the same profile as before.

\item Duality.

Consider the profile:
\begin{itemize}
    \item $a\succ b\succ c$.
    \item $c\succ b\succ a$.
\end{itemize}
The veto core is $\set{b}$ in both the profile and the reversed profile.
\item Condorcet loser criterion.

Consider the profile:
\begin{itemize}
    \item $a\succ b\succ c\succ d$.
    \item $b\succ d\succ c\succ a$.
    \item $d\succ a\succ c\succ b$.
\end{itemize}
The veto core is $\set{c}$, but $c$ is the Condorcet loser.
\end{enumerate}
\end{proof}

Independence of clones and independence of Pareto-dominated alternatives are very desirable properties since they formalise resistance to various forms of agenda-manipulation such as vote-splitting. Unfortunately, very few anonymous and neutral rules have these properties: ranked pairs \citep{Tideman1987}, split cycle \citep{HollidayPacuit23pc}, Schulze's method \citep{Schulze11}, and STV. However, the first item of the above proof also shows that every rule satisfying anonymity, neutrality, and positive-responsiveness (defined below) fails both independence properties, so this is an inevitable trade-off. Because of this we consider a weaker notion of independence in the following sections: independence of unanimous losers.

Electoral consistency is an obviously desirable feature, and its violation seems like a violation of common sense.\footnote{For most desirable properties in social choice theory it is easy to come up with a situation where violating that property makes sense, but the authors have been unable to come up with a reasonable violation of electoral consistency. This property can be incompatible with other desirable properties, but such incompatibilities cannot imply that electoral consistency is undesirable in itself. We invite the readers to send us any suggestions.} However, no Condorcet method satisfies this property (\citealp[theorem~2]{YoungLevenglick78}, and \citealp[proposition~2.5]{Zwicker16}), and indeed the only anonymous and neutral electorally consistent voting rules are scoring rules \citep[theorem~1]{Young75}; this is a property we have to do without if we subscribe to any view of voting but the utilitarian. We shall see later that core-consistency is consistent with weaker versions of this principle, such as negative and positive involvement.

The failure of the other two properties is less concerning. The Condorcet loser criterion is fundamentally majoritarian, while core-consistency by design is about protecting minorities from the majority. The ethical principles behind duality are opaque to say the least: $b$ seems a natural compromise candidate in the profile consisting of $abc$ and $cba$, or failing that $\set{a,c}$ seems like a reasonable tie, but the only anonymous and neutral solution admitted by duality is $\set{a,b,c}$.

Thus far we have talked about properties shared by all core-consistent voting rules. In the next section we will consider concrete families of this class.

\subsection{Core-consistent voting rules}

Given that we now know how to compute the veto core in polynomial time, two ad-hoc solutions for selecting from the veto core readily present themselves.

\begin{definition}
Given a voting rule $\varphi$, define $\CS_\varphi$ to be the voting rule that, on profile $P$, first computes $\Core(P)$, then computes $\varphi$ on the reduced profile:
$$\CS_\varphi(P)=\varphi(P\mid_{\Core(P)}).$$

A mapping that takes a profile to one or more rankings of the candidates from best to worst is called a \emph{ranking rule}. Given ranking $R$, let $\text{top}(R)$ be the top ranked element of $R$. Given a ranking rule $\rho$, define $\CR_\rho$ to be the voting rule that, on profile $P$, computes $\rho(P)$, then selects the top ranked candidate(s) in $\Core(P)$ with respect to rankings in $\rho(P)$:
$$\CR_\rho(P)=\bigcup_{R\in \rho(P)}\text{top}(R\mid_{\Core(P)}).$$
\end{definition}

Ranking and voting rules are closely related, and the properties of voting rules can be formalised for ranking rules as well.

\begin{definition}
A ranking rule $\rho$ is \emph{anonymous} if $\rho(P)=\rho(\pi P)$ for each permutation of the voters~$\pi$ and each profile~$P$. It is \emph{neutral} if $\rho(\tau P)=\tau \rho(P)$ for each permutation of the candidates~$\tau$ and each~$P$.


Let $\underline{R}(c)$ be the set of candidates ranked no higher than $c$ under~$R$. A ranking rule $\rho$ is \emph{positional-consistent} if for each $P$ where $a$ permutedly dominates $b$ (\autoref{def:propertiesvoting}) we have that for each $R\in \rho(P)$ there exists $R'\in \rho(P)$ with $\underline{R}(b)\subseteq \underline{R'}(a)$. It is \emph{strongly positional-consistent} if for each $P$ where $a$ permutedly dominates $b$ we have that $a$ is ranked higher than $b$ in each $R\in \rho(P)$.

Let $\overline{R}(c)$ be the set of candidates ranked no lower than $c$ under~$R$. A ranking rule $\rho$ is \emph{iterative positional-consistent} if for each triple of profile $P$, ranking $R\in \rho(P)$, and candidate~$c$, this candidate does not strictly permutedly dominate any other candidate in the restricted profile~$P\mid_{\overline{R}(c)}$.

Given a ranking rule $\rho$, define the voting rule $\text{top}(\rho)$ by $\text{top}(\rho)(P)=\bigcup_{R\in \rho(P)} \set{\text{top}(R)}$. Let $X$ be any property defined for voting rules. We say that $\rho$ satisfies $X$ \emph{for top-ranked candidates} if $\text{top}(\rho)$ satisfies $X$.
\end{definition}


Clearly both $\CS_\varphi$ and $\CR_\rho$ select from the veto core, and are neutral and anonymous if the corresponding $\varphi$ and $\rho$ are. However one may well find the underlying logic unsatisfying -- they are core-consistent in the same sense that Black's rule is a Condorcet method \citep[chapter~IX]{Black58book}. 

More in line with the logic of core-consistency is the solution proposed by \citet{Moulin1981}:

\begin{definition}\label{def:vetotokens}
Let $r:\NN\times\NN\ria\NN$ and $t:\NN\times\NN\ria\NN$ be functions such that $r(m,n),t(m,n)$ satisfy the provisos of Lemma~\ref{lem:moulin}, and $L$ be a function such that $L(k,n)$ returns a linear order over $kn$ elements. 

\emph{Voting by veto tokens} (with parameters $r,t,L$) is the voting rule where, on profile $P$, we create $t(m,n)$ clones of every candidate and endow every voter with $r(m,n)$ veto tokens. $L(r(m,n),n)$ defines an order over the veto tokens and the voters successively veto their least-preferred candidate in the order specified by $L$ until $\gcd(m,n)$ clones are left. The winners are all candidates who have at least one clone remaining.
\end{definition}

\begin{example}\label{ex:tokens}
Let $r,t$ return the smallest integers satisfying the provisos of Lemma~\ref{lem:moulin}. That is, letting $\alpha=\gcd(m,n)$, $r$ and $t$ are the following functions:
\begin{align*}
    t(m,n) &=\min\set{t'\mid\exists r'\text{ such that } r'n = t'm-\alpha \;\text{and}\; t'>\alpha n},\\
    r(m,n) &= \frac{t(m,n)m-\alpha}{n}.
\end{align*}
Let $L(r(m,n),n)$ be the order that places all $r(m,n)$ tokens of voter 1 first, followed by all $r(m,n)$ tokens of voter 2, and so on.

Consider the following profile:
\begin{itemize}
\item $c\succ_1 a\succ_1 b$.
\item $a\succ_2 b\succ_2 c$.
    \item $b\succ_3 a\succ_3 c$.
\end{itemize}
With $n=m=3$, $\alpha=3$, 10 is the smallest integer greater than $\alpha n=9$ for which there exists an $r'$ satisfying $3r'=3\cdot 10 - 3 =27$, and thus $r(3,3)=9$ and $t(3,3)=10$. $L(9,3)$ is the order consisting of 9 tokens of voter 1, followed by 9 tokens of voter 2, followed by 9 tokens of voter 3.

Make 10 clones of every candidate. Voter 1 uses his tokens to veto 9 clones of $b$. Voter 2 vetoes 9 clones of $c$. Voter 3 vetoes the last clone of  $c$, and 8 clones of $a$. At the end of the procedure $a$ has 2 clones remaining and $b$ has 1, so the outcome is a tie between $a$ and $b$.
\end{example}

While different choices of $r$, $t$, and $L$ may lead to different outcomes, such an outcome will always lie in the veto core.

\begin{restatable}{proposition}{vetobytokens}\textup{(\citealp[p. 413]{Moulin1981})}\label{prop:vetobytokens}
For every choice of $r,t,L$, voting by veto tokens is a core-consistent voting rule.
\end{restatable}

Indeed we can show a stronger property, but the proof is tedious (see \autoref{app:tokens}).

\begin{restatable}{proposition}{tokenscore}\label{lem:tokenscore}
Suppose candidate $c$ is in the veto core of profile~$P$. For every choice of $r,t$, there exists an $L$ such that voting by veto tokens elects~$c$.
\end{restatable}

This gives us a family of neutral, core-consistent rules. However the procedure is inherently non-anonymous. If we are willing to admit a randomised procedure we could, of course, simply choose an order at random; if we are to stick to the standard deterministic model, we can propose a simultaneous variant, inspired by  the probabilistic
serial mechanism of \citet{BogomolnaiaMoulin01}:

\begin{definition}
\emph{Veto by consumption} is the voting rule that is computed by an algorithm that has voters eat the candidates from the bottom of their order up. Every candidate starts with capacity~1, and is being eaten by the voters who rank it last. Each voter eats at speed~1.

The outcome can be computed as follows. In round $k$, let $c_i$ be the capacity of candidate $i$ and $n_i$ the number of voters eating $i$. The round lasts until some candidate is fully eaten. To move to round $k+1$, do the following:
\begin{enumerate}
    \item Find an $i$ which minimises $c_i/n_i$. Let $r_k$ be this minimum ratio -- this is the duration of the round.
    \item Update all capacities, $c_j=c_j-r_kn_j$.
    \item For all candidates who reached capacity 0, reallocate the voters eating these to their next worst candidates.
\end{enumerate}
The last candidate to be eaten is the winner. In the case of two or more candidates being eaten simultaneously, a tie is declared among those candidates.
\end{definition}

\begin{restatable}{proposition}{vetobyconsumption}
Veto by consumption is a core-consistent voting rule.
\end{restatable}

Observe that the duration of the algorithm is exactly $\frac{m}{n}$ units, and the complexity is $O(mn)$ arithmetic operations.\footnote{As the size of the numbers involved may grow in every round, we cannot say that the \emph{total} complexity is $O(mn)$.} As a voting rule, veto by consumption is clearly anonymous, neutral, Pareto-efficient, and never selects the candidates which are ranked last by more than $\frac{n}{m}$ voters. 

\section{Properties of core-consistent voting rules}\label{sec:properties}

\subsection{Monotonicity}

Monotonicity is meant to capture the notion that increased support for a candidate should not harm that candidate.

\begin{definition}

Let $P'$ be obtained from profile $P$ by having one voter raise candidate $c$ one position in his voting order. A voting rule $\varphi$ is \emph{monotonic} if for each $P$ where $c\in \varphi(P)$ we have $c\in \varphi(P')$.

Let $\underline{R}(c)$ be the set of candidates ranked no higher than $c$ under~$R$. A ranking rule $\rho$ is \emph{monotonic} if for each $R\in \rho(P)$ there exists $R'\in \rho(P')$ such that $\underline{R}(c)\subseteq \underline{R'}(c)$.

\end{definition}

Monotonicity is a very basic property satisfied by many voting rules. The exceptions are typically iterative rules such as STV, plurality with run-off, or sequential majority elimination \citep[theorem~2]{Fishburn82,Smith73}, which can lead to controversy since such rules are used in many real elections.

It turns out that it is quite easy to combine monotonicity and core-consistency: it is satisfied by the veto core, voting by veto tokens, veto by consumption, and $\CR_\rho$ for suitable choice of $\rho$. Since $\CS_\varphi$ is morally an iterative rule, it should be not surprising that it  almost always fails monotonicity.

\begin{restatable}{proposition}{vetocoremonotone}
    The veto core is monotonic.
\end{restatable}

\begin{proposition}
Let $\varphi$ be a voting rule that is majority-consistent when the number of candidates is two. Then $\CS_\varphi$ is not monotonic.
\end{proposition}
\begin{proof}
Consider a profile with one voter each of types $cba$, $bac$, $acb$, and two voters each of types $abc$, $cab$, $bca$. The veto core is $\set{a,b,c}$. Observe that $a$ is preferred by the majority to $b$, $b$ to $c$, and $c$ to $a$.

Without loss of generality, suppose that $a$ is a (possibly tied) winner. Now suppose the voter of type $cba$ changes their preferences to $cab$, moving $a$ up one position. The veto core is now $\set{a,c}$, since $b$ is ranked last by four voters. However $c$ is the unique winner, since $\varphi$ is the majority rule for $m=2$.
\end{proof}

Observe how extremely weak the requirements on $\varphi$ are -- almost every well-known voting rule reduces to the majority rule in the case of two candidates.
However we cannot relax the requirements further -- if $\varphi$ is trivial (i.e., $\CS_\varphi$ is equivalent to the veto core), imposed (according to some fixed ranking of candidates), or dictatorial, then $\CS_\varphi$ is monotonic.

\begin{restatable}{proposition}{rrmonotone}\label{prop:rankmonotonic}
If a ranking rule $\rho$ is monotonic, then $\CR_\rho$ is a monotonic voting rule.
\end{restatable}

The proof that veto by consumption and voting by veto tokens are monotonic is a bit involved, and can be found in \autoref{app:monotonic}.

\begin{proposition}\label{prop:HHAmonotonic}
Veto by consumption and voting by veto tokens are monotonic.
\end{proposition}

\subsection{Positive responsiveness}

Positive responsiveness is a strengthening of monotonicity that captures the idea that the smallest extra support for a tied candidate is enough to break the tie.

\begin{definition}
Let $P'$ be obtained from profile $P$ by having one voter raise candidate $c$ one position in his voting order. A voting rule $\varphi$ satisfies \emph{positive responsiveness} if for each $P$ where $c\in \varphi(P)$ we have $\varphi(P')=c$.

Let $\underline{R}(c)$ be the set of candidates ranked no higher than $c$ under $R$. A ranking rule $\rho$ is positively responsive if $\underline{R}(c)\subseteq \underline{R'}(c)$ for all $R\in \rho(P)$ and $R'\in \rho(P')$.
\end{definition}

Positive responsiveness is harder to satisfy than monotonicity. It is intuitively a positional notion, and thus it is not surprising that strongly monotonic scoring rules rules are positive responsive, but it is failed by many majority-consistent rules such as Copeland, Simpson's maxmin rule, and STV. We see below that core-consistency is consistent with positive responsiveness, but it is achieved through what is essentially a positive responsive tie-breaking mechanism (\autoref{prop:rrposresponse}, \autoref{prop:HHAGAposresponse}).

\begin{proposition}\label{prop:sfposresponse}
Let $\varphi$ be a neutral and anonymous voting rule. Then $\CS_\varphi$ is not positively responsive.
\end{proposition}
\begin{proof}
Consider the profile with a voter of type $dabc$ and a voter of type $cbad$. The veto core is $\set{a,b}$, and $f$ must declare a tie because in the reduced profile we have one voter of type $ab$ and one of type $ba$. After changing the type of the first voter to $adbc$ the veto core is still $\set{a,b}$, and the reduced profile is still $ab$, $ba$, so the candidates are tied, which violates positive responsiveness.
\end{proof}

We cannot relax assumptions. If $\varphi$ is imposed (according to some fixed ranking of candidates), then $\varphi$ is anonymous and $\CS_\varphi$ is positively responsive. If $\varphi$ is dictatorial, then $\varphi$ is neutral and $\CS_\varphi$ is positively responsive. 

By \autoref{prop:sfposresponse}, the veto core fails positive responsiveness, because the $\varphi$ that elects every candidate is anonymous and neutral, and $\CS_\varphi=\Core$.

\begin{restatable}{proposition}{rrpositiveresponsive}\label{prop:rrposresponse}
If a ranking rule $\rho$ is positively responsive, then $\CR_\rho$ is a positively responsive voting rule.
\end{restatable}

\begin{proposition}
Veto by consumption fails positive responsiveness.
\end{proposition}
\begin{proof}
Consider the profile with a voter of type $abcd$ and $cdab$. The winners are $\set{a,c}$. Change the type of the second voter to $cadb$. The winners are still $\set{a,c}$.
\end{proof}

\subsection{Independence of unanimous losers}

A unanimous loser is a candidate that is ranked last by every single voter. Independence of unanimous losers formalises the notion that such candidates should have no effect on the election.

\begin{definition}
A candidate is a \emph{unanimous loser} if the candidate is ranked last in every voter's preference order. A voting rule satisfies \emph{independence of unanimous losers} if removing the unanimous loser does not change the set of elected outcomes.
\end{definition}

The practical relevance of such independence is protection from spoilers -- it is relatively easy to add a terrible candidate to an election, and we should hope that the agenda is not that easy to manipulate. By itself independence of unanimous losers is a fairly weak axiom, yet it is violated by a number of well-known rules such as antiplurality (without tie-breaking) and Nanson's procedure  \citep{KondratevIanovskiNesterov19}. 

Veto by consumption clearly satisfies independence of unanimous losers, since all these candidates are eaten simultaneously by all the voters, before moving on to the real candidates. On the other hand it is easy to see that the veto core fails this property -- once we add $m(n-1)$ unanimous losers,
$$v(T)=\ceil{mn\frac{|T|}{n}}-1=m|T|-1,$$
and only the grand coalition will have enough veto power to block any real candidate. The veto core will consist of all Pareto-optimal candidates. 

The other core-consistent rules fare far worse.

\begin{proposition}
For every voting rule $\varphi$, $\CS_\varphi$ fails independence of unanimous losers.
\end{proposition}
\begin{proof}
Suppose, for contradiction, that $\CS_\varphi$ satisfies independence of unanimous losers. Consider the profile with voter 1 of type $bca$ and voter 2 of type $abc$. The veto core is $\set{b}$, so $b$ is the unique winner. Let $P$ be the profile with voter 1 of type $bcad$ and voter 2 of type $abcd$. It is obtained by adding the unanimous loser $d$, so by independence of unanimous losers $\CS_\varphi(P)=b$. Consider another profile with voter 1 of type $bac$ and voter 2 of type $acb$. The veto core is $\set{a}$, so $a$ is the unique winner. Let $P'$ be the profile with voter 1 of type $bacd$ and voter 2 of type $acbd$, which is obtained by adding the unanimous loser $d$. It must be the case that $\CS_\varphi(P')=a$, but that is a contradiction because $\Core(P)=\Core(P')=\set{a,b}$ and the two profiles coincide on the relative ranking of $a$ and $b$, so $\varphi$ must choose the same winner in both cases.
\end{proof}

\begin{proposition}
If $\rho$ is majority-consistent for top-ranked candidates, positional-consistent, or iterative positional-consistent, then $\CR_\rho$ fails independence of unanimous losers.
\end{proposition}
\begin{proof}
Suppose $\rho$ is majority-consistent for top-ranked candidates. Consider the profile with three voters of type $abc$ and two voters of type $cba$. The veto core is $\set{b}$, so $b$ is the winner. Now add the unanimous loser $d$ to obtain three voters of type $abcd$ and two of type $cbad$. The veto core is $\set{a,b}$, $a$ is the majority winner, and hence $\CR_\rho$ must select only~$a$. 

Assume, for contradiction, that $\rho$ is positional-consistent and satisfies independence of unanimous losers. Recall the profile from \autoref{CorePositionalDisjoint}, and add the unanimous loser $g$.
\begin{itemize}
    \item $a\succ d\succ e\succ f\succ b\succ c\succ g$.
    \item $a\succ c\succ e\succ f\succ b\succ d\succ g$.
    \item $a\succ c\succ d\succ f\succ b\succ e\succ g$.
    \item $b\succ c\succ e\succ d\succ a\succ f\succ g$.
    \item $b\succ c\succ e\succ d\succ a\succ f\succ g$.
\end{itemize}
The veto core was $\set{b}$ before the addition of $g$. However, the veto core is now $\Core(P)=\set{a,b,c,d,e}$, and $a$ permutedly dominates~$b$. By independence of unanimous losers, for each $R\in \rho(P)$, $b=\text{top}(R\mid_{\Core(P)})$ and hence $\Core(P)\subseteq \underline{R}(b)$. Because $\rho$ is positional-consistent, for each $R\in \rho(P)$, there exists some $R'\in \rho(P)$ such that $\underline{R}(b)\subseteq\underline{R'}(a)$. Then $\Core(P)\subseteq \underline{R'}(a)$, $a=\text{top}(R'\mid_{\Core(P)})$ and hence $a$ also wins.

Suppose $\rho$ is iterative positional-consistent. Consider the profile with 2 voters of type $acb$, 4 of type $bca$, 5 of type $abc$. The veto core is $\set{b}$. After adding the unanimous loser $d$, the veto core is $\set{a,b}$. For each $R\in \rho(P)$, $d$ is ranked last, because all other candidates strictly permutedly dominate~$d$. Then $c$ is ranked previous to last, because both $a$ and $b$ strictly permutedly dominate $c$ in the profile restricted to $\set{a,b,c}$. Then $b$ is ranked below $a$, because $a$ strictly permutedly dominates $b$ in the profile restricted to $\set{a,b}$. Hence, $a$ is the unique winner.
\end{proof}

The above proposition shows that $\CR_\rho$ fails independence of unanimous loser for many reasonable choices of $\rho$. We cannot prove such a proposition for \emph{all} choice of $\rho$, since veto by consumption satisfies independence of unanimous losers, so we could choose $\rho$ to be the elimination order of candidates in a run of veto by consumption. However, as the following proposition shows, this is essentially all we can do: if $\rho$ satisfies some minimal requirements but is not itself core-consistent, then $\CR_\rho$ fails independence of unanimous losers.

\begin{proposition}
Let $\rho$ be a ranking rule that satisfies independence of unanimous losers and Pareto efficiency, but is not core-consistent (all three properties for top-ranked candidates). Then $\CR_\rho$ fails independence of unanimous losers.
\end{proposition}
\begin{proof}
Suppose candidate $c$ is top-ranked by some ranking in $\rho(P)$, Pareto-optimal, but not in the veto core. Hence $c$ is not selected by $\CR_\rho(P)$. In the profile $P'$ that results from $P$ by adding $m(n-1)$ unanimous losers, the veto core consists of all Pareto-optimal candidates, including~$c$. Because $\rho$ satisfies independence of unanimous losers for top-ranked candidates, $c$ is still top-ranked by some ranking in $\rho(P')$ and hence is selected by $\CR_\rho(P')$.
\end{proof}

\subsection{Participation}

Voter adaptability, positive  involvement, and negative involvement are variations of the participation principle that a voter should not regret turning up to the polling booth. The properties formalised by these axioms are conceptually similar, but none of the three imply another.

\begin{definition}
For a profile $P$, we denote $P_{-i}$ the profile obtained by removing a voter~$i$ from~$P$. 

A voting rule $\varphi$ satisfies \emph{voter adaptability} if for each profile~$P$ where voter~$i$ ranks a candidate~$c$ first and $\varphi(P_{-i})=c$ we have $\varphi(P)=c$ \citep{Richelson78III}.

A voting rule $\varphi$ satisfies \emph{positive involvement} if for each profile~$P$ where voter~$i$ ranks a candidate~$c$ first and $c\in \varphi(P_{-i})$ we have $c\in \varphi(P)$ \citep[proposition~2(iii)]{Smith73}.

A voting rule $\varphi$ satisfies \emph{negative involvement} if for each profile~$P$ where voter~$i$ ranks a candidate~$c$ last and $c\in \varphi(P)$ we have $c\in \varphi(P_{-i})$ \citep[p.~93]{Meredith1913book}.\footnote{If a voting rule satisfies negative involvement, then it is resistant to the no-show paradox in \citet{FishburnBrams83}.} Rule~$\varphi$ satisfies \emph{strong negative involvement} if in addition $\varphi(P_{-i})=c$.
\end{definition}

Participation is usually defined to require $i$ to prefer the outcome with him than without him, $\varphi(P)\succeq_i\varphi(P_{-i})$, but considerable differences arise as to how authors deal with ties. It is a very strong property and among well-known rules it is only satisfied by the scoring rules, and is failed by every iterative scoring rule (\citealp[corollary~6.1]{Saari89jet}, and \citealp[theorem~7]{NunezSanver17}) and Condorcet method \citep{Moulin1988}.\footnote{\citet[corollary~6.1]{Saari89jet} considers resolute (i.e. single-valued) election outcomes and participation of two voters of the same type. \citet[theorem~7]{NunezSanver17} assume that ties are broken lexicographically. \citet{Moulin1988} and \citet[theorems~3,4]{BrandtGeistPeters17} show that every resolute Condorcet method fails participation. \citet[proposition~2]{JimenoPerezGarcia09} and \citet[theorems~5-8]{BrandtGeistPeters17} show that every irresolute Condorcet method fails participation for optimists or pessimists.} The three properties we consider are much weaker than standard participation, and rules satisfying voter adaptability, positive or negative involvement are somewhat more common. Scoring rules satisfy all three, as do certain Condorcet methods such as Simpson's maxmin rule, whereas Young's rule satisfies negative involvement but not positive involvement or voter adaptability \citep{Perez01} and split cycle satisfies positive and negative involvement \citep{HollidayPacuit23pc}, but not voter adaptability. The Coombs rule satisfies negative involvement but not positive involvement or voter adaptability \citep[p.~60]{FelsenthalNurmi18book}, plurality with runoff and STV satisfy voter adaptability, positive but not negative involvement (\citealp[p.~93]{Meredith1913book}, and \citealp[p.~208]{FishburnBrams83}).\footnote{STV is the only single-winner iterative scoring rule (with one by one elimination) that satisfies positive involvement \citep[p.~258]{Saari94book}. However, \citet[p.~212]{FishburnBrams83} show that STV does not satisfy positive involvement as a multiwinner voting rule and acknowledge that it has been known at least since 1910.}


We have been unable to devise a core-consistent rule that satisfies voter adaptability, but neither could we show that core-consistency and voter adaptability are inconsistent; the problem remains open. Negative involvement seems easier for a core-consistent rule to satisfy than positive involvement, but the veto core satisfies both.

\begin{restatable}{proposition}{vetocoreinvolvement}\label{prop:coreinvolvement}
The veto core satisfies both positive  involvement and negative involvement, but fails strong negative involvement.
\end{restatable}

However the veto core fails voter adaptability, as does every rule from the $\CS_\varphi$ family.

\begin{proposition}\label{prop:sfadaptability}
For every voting rule $\varphi$, $\CS_\varphi$ fails voter adaptability.
\end{proposition}
\begin{proof}
Consider two profiles:

\begin{tabular}{ll}
     \parbox{0.4\textwidth}{\begin{itemize}
         \item $a\succ_1b\succ_1c.$
         \item $a\succ_2b\succ_2c.$
         \item $a\succ_3b\succ_3c.$
         \item $b\succ_4c\succ_4a.$
         \item $b\succ_5c\succ_5a.$
         \item $b\succ_6a\succ_6c.$
     \end{itemize} }
     &
     \parbox{0.4\textwidth}{\begin{itemize}
         \item $a\succ_1b\succ_1c.$
         \item $a\succ_2c\succ_2b.$
         \item $a\succ_3c\succ_3b.$
         \item $b\succ_4a\succ_4c.$
         \item $b\succ_5a\succ_5c.$
         \item $b\succ_6a\succ_6c.$ 
     \end{itemize} }
\end{tabular}

Consider the profile consisting of voters 1--5 on the left. The veto core is $\set{b}$. Once we add 6, the veto core becomes $\set{a,b}$, and by voter adaptability $b$ should remain the unique winner.

Now consider the profile consisting of voters 2--6 on the right. The veto core is $\set{a}$. Once we add 1, the veto core becomes $\set{a,b}$, and by voter adaptability $a$ should remain the unique winner. However, that is impossible because the left and right profiles coincide on their restriction to $\set{a,b}$.
\end{proof}

We cannot extend the above proposition to positive or negative involvement, because the veto core satisfies these properties, and $\CS_\varphi=\Core$ with the $\varphi$ that elects all candidates. However, imposing a very weak condition on $\varphi$ is sufficient.

\begin{proposition}\label{prop:sfinvolvement}
Let $\varphi$ be a majority-consistent voting rule when the number of candidates is two. Then $\CS_\varphi$ fails positive involvement and negative involvement.
\end{proposition}
\begin{proof}
Consider profile $P_{-i}$ with five voters of type $abc$ and three of type $bca$. The veto core is $\set{b}$. If we add voter $i$ with type $bac$ the veto core becomes $\set{a,b}$, and every majority-consistent $\varphi$ selects only $a$. Hence, $\CS_\varphi$ fails positive involvement.

Consider a profile $Q$ with four voters (1--4) of type  $abc$, four (5--8) of type $cab$, four (9--12) of type $bca$. Without loss of generality, suppose $a$ is among the winners according to $\varphi$, $a\in \varphi(Q)$. Now let $P$ be a profile with four voters (1--4) of type $abcd$, one (voter 5) $cabd$, three (6--8) $cadb$, three (9--11) $bcad$, and voter $i$ of type $bcda$. The veto core is $\set{a,b,c}$, and the restriction of $P$ to $\set{a,b,c}$ is $Q$, so $a\in\CS_\varphi(P)=\varphi(Q)$. However, if we remove $i$, then the veto core is $\set{a,c}$ in~$P_{-i}$, and only $c$ wins by majority, $\CS_\varphi(P_{-i})=c$, violating negative involvement.
\end{proof}

Whether or not a rule of the family $S_\varphi$ can satisfy strong negative involvement remains open.

\begin{problem}
Does there exist a voting rule $\varphi$ for which $S_\varphi$ satisfies strong negative involvement?
\end{problem}

Veto by consumption, meanwhile, satisfies strong negative involvement, but fails both positive involvement and voter adaptability.

\begin{restatable}{proposition}{VbCnegativeInvolvement}\label{prop:VbCnegativeInvolvement}
Veto by consumption satisfies strong negative involvement.
\end{restatable}

\begin{proposition}
Veto by consumption fails voter adaptability and positive involvement.
\end{proposition}
\begin{proof}
Consider the voters:
\begin{itemize}
    \item $a\succ b\succ c\succ d\succ e\succ f\succ g\succ h\succ i\succ j\succ k$.
    \item $k\succ j\succ i\succ h\succ g\succ f\succ e\succ d\succ c\succ b\succ a$.
    \item $f\succ a\succ b\succ c\succ d\succ e\succ g\succ h\succ i\succ j\succ k$.
\end{itemize}
In the profile with the first two voters the winner under any core-consistent voting rule is candidate~$f$. With all three voters the winner is $d$, but the third voter prefers $f$ to any other candidate. (Candidates $a,b,k,j,i,h$ are eaten first. Then $g$ is eaten and the capacity of $c$ is reduced to $1/2$. Then $c$ is eaten and the capacity of $f,e$ are reduced to $1/2$. Then $f,e$ are eaten and the capacity of $d$ is reduced to~$1/2$.)
\end{proof}

The $\CR_\rho$ family fares no better with respect to these properties.

\begin{proposition}\label{prop:rrposinvolvement}
Let $\rho$ be a positional-consistent ranking rule. Then $\CR_\rho$ fails voter adaptability. If, in addition, $\rho$ is strongly positional-consistent, then $\CR_\rho$ fails positive involvement.
\end{proposition}
\begin{proof}
Suppose $\rho$ is positional-consistent. Consider the profile from \autoref{CorePositionalDisjoint}:
\begin{itemize}
    \item $a\succ d\succ e\succ f\succ b\succ c$.
    \item $a\succ c\succ e\succ f\succ b\succ d$.
    \item $a\succ c\succ d\succ f\succ b\succ e$.
    \item $b\succ c\succ e\succ d\succ a\succ f$.
    \item $b\succ c\succ e\succ d\succ a\succ f$.
\end{itemize}
The veto core is $\set{b}$. Let $P$ be a profile consisting of these five voters and a voter of type $bacdef$. The veto core is now $\set{a,b}$. Observe that $a$ permutedly dominates $b$. If $b$ is a winner under $\CR_\rho$ in $P$, then $b$ is ranked higher than $a$ under some $R\in \rho(P)$. Because $\rho$ is positional-consistent, there exists some $R'\in \rho(P)$ such that $\underline{R}(b)\subseteq\underline{R'}(a)$. Hence, $a$ is ranked higher than $b$ under $R'$ and also wins. So, the outcome is either $a$ or $\set{a,b}$, violating voter adaptability. If $\rho$ is strongly positional-consistent, then $a$ is the unique winner, violating positive involvement.
\end{proof}

\begin{proposition}\label{prop:rrmajinvolvement}
Suppose $\rho$ is majority-consistent for top-ranked candidates, or iterative positional-consistent. Then $\CR_\rho$ fails voter adaptability and positive involvement.
\end{proposition}
\begin{proof}
Consider a profile with five voters of type $abc$ and three of type $bca$. The veto core is $\set{b}$. If we add a voter of type $bac$, the veto core becomes $\set{a,b}$. The majority ranks $a$ first, so if $\rho$ is majority-consistent for top-ranked candidates then $a$ is the unique winner under~$\CR_\rho$. If $\rho$ is iterative positional-consistent, then $c$ is ranked last, because both $a$ and $b$ strictly permutedly dominate $c$. Then $b$ is ranked below $a$, because $a$ strictly permutedly dominates $b$ when only these two candidates remain. In either case voter adaptability and positive involvement are violated.
\end{proof}

The below proposition may not sound surprising, but it covers almost all  $\rho$ not covered by the previous propositions.

\begin{restatable}{proposition}{rrualinvolvement}\label{prop:rrualinvolvement}
Let $\rho$ satisfy independence of unanimous losers and Pareto efficiency, but fail voter adaptability/positive involvement (all four properties for top-ranked candidates). Then $\CR_\rho$ also fails voter adaptability/positive involvement.
\end{restatable}

Propositions \ref{prop:sfadaptability}, \ref{prop:sfinvolvement}, \ref{prop:rrposinvolvement}, \ref{prop:rrmajinvolvement}, and \ref{prop:rrualinvolvement} cover every way we could think of to select from the veto core using a known voting/ranking rule, and show that such approaches will fail positive involvement and voter adaptability. That does not necessarily mean these properties are inconsistent with core-consistency, but that we have to consider rules that select from the veto core directly. For example, we have seen in \autoref{prop:coreinvolvement} that the veto core satisfies positive involvement. However we have been unable to find a core-consistent voting rule satisfying voter adaptability.

\begin{problem}
Is voter adaptability consistent with core-consistency?
\end{problem}

\subsection{The matter of ties}

We have argued that the veto core is itself a voting rule, albeit with a high rate of ties. All the core-consistent rules we have introduced, therefore, could be viewed merely as tie-breaking mechanisms for the veto core. But what if these tie-breaking mechanisms themselves have ties?

Voting by veto tokens has the property that for coprime $m$ and $n$ exactly one winner is selected ($gcd(m,n)=1$), but veto by consumption offers no such guarantees. Simulation results suggest that the rate of ties is small for large elections (\autoref{tab:hungry}), and indeed we can show that the rule satisfies large voter population resolvability.

\begin{table*}
\centering
\begin{tabular}{r | c c c c c c c c c c c c c }
& $m=2$ & 3 & 4 & 5 & 6 & 7 & 8 & 9 & 10 & 11 & 100 & 101 & 200\\
\hline
$n=2$ & 1.5 & 1.167 & 1.42 & 1.22 & 1.38 & 1.24 & 1.37 & 1.25 & 1.35 & 1.26 & 1.31 & 1.3 & 1.31\\
3 & 1 & 1.6 & 1.04 & 1.02 & 1.41 & 1.04 & 1.03 & 1.33 & 1.04 & 1.03 & 1.04 & 1.04 & 1.04\\
4 & 1.38 & 1.24 & 1.58 & 1.13 & 1.28 & 1.2 & 1.36 & 1.15 & 1.23 & 1.18 & 1.13 & 1.12 & 1.12\\
5 & 1 & 1 & 1 & 1.42 & 1 & 1 & 1 & 1 & 1.21 & 1 & 1.02 & 1 & 1\\
6 & 1.31 & 1.36 & 1.21 & 1.15 & 1.3 & 1.08 & 1.1 & 1.14 & 1.09 & 1.08 & 1.03 & 1.03 & 1.02\\
7 & 1 & 1 & 1 & 1 & 1 & 1.18 & 1 & 1 & 1 & 1 & 1 & 1 & 1\\
8 & 1.27 & 1.08 & 1.24 & 1.08 & 1.11 & 1.06 & 1.11 & 1.04 & 1.05 & 1.03 & 1.01 & 1.01 & 1.01\\
9 & 1 & 1.22 & 1.01 & 1 & 1.08& 1.03 & 1.02 & 1.06 & 1.01 & 1.01 & 1 & 1 & 1\\
10 & 1.25 & 1.14 & 1.11 & 1.13 & 1.05 & 1.03 & 1.02 & 1.01 & 1.04 & 1.01 & 1 & 1 & 1\\
11 & 1 & 1 & 1 & 1 & 1 & 1 & 1 & 1 & 1 & 1.02 & 1 & 1 & 1\\
100 & 1.08 & 1.01 & 1.01 & 1 & 1 & 1 & 1 & 1 & 1 & 1 & 1 & 1 & 1\\
101 & 1 & 1 & 1 & 1 & 1 & 1 & 1 & 1 & 1 & 1 & 1 & 1 & 1\\
200 & 1.06 & 1.01 & 1 & 1 & 1 & 1 & 1 & 1 & 1 & 1 & 1 & 1 & 1\\
\end{tabular}
\caption{Number of veto by consumption winners, average of 1,000,000 IC profiles.}\label{tab:hungry}
\end{table*}

\begin{restatable}{proposition}{HHAWoodall}
    With the Impartial Culture assumption and a fixed number of candidates $m$, as $n\ria\infty$, the expected number of winners under veto by consumption tends to~1.
\end{restatable}

However, the rule does not satisfy  large candidate population resolvability.

\begin{restatable}{proposition}{VbCsizentwo}
    For each profile with $n$ voters and $m$ candidates, the number of veto by consumption winners does not exceed the minimum of $n$ and~$m$. 
    
    With the Impartial Culture assumption and $n=2$, the expected number of veto by consumption winners is:
    \begin{equation*}
        2-1+\frac{1}{2}-\frac{1}{3}+\ldots+\frac{(-1)^m}{m},
    \end{equation*}
    which limits to $2-\ln{2}\approx 1.307$ as $m\rightarrow\infty$.
\end{restatable}

Large candidate population resolvability is less commonly studied than large voter population resolvability because most authors approach voting with a political election in mind. Veto-based methods, however, originate in the context of a small number of sophisticated agents making a choice from a potentially vast pool of possible outcomes \citep{Mueller1978, Moulin1981a}. The case of $n=2$ from the above proposition is interesting in particular since the outcomes of veto by consumption correspond exactly to the subgame perfect equilibria of the Alternate Strike scheme \citep[theorem~1]{Anbarci06} and the outcomes of the cautious voting under gradual vetoes mechanism  \citep{BolLaslierNunez22}. Because of this, it is worth considering how ties under veto by consumption should be broken.

We propose using generalised antiplurality\footnote{This the generalised scoring rule where the winners have the least last-place positions, breaking first-order ties with the number of second-to-last positions, second-order ties with the number of third-to-last positions, and so on.} as a tie-breaking rule. Not only is the idea of looking at the profile from the bottom up in line with the general philosophy of core-consistency, but this rule is also sufficiently powerful to restore large candidate population resolvability to veto by consumption, maintains anonymity and neutrality, and weakly improves on the studied properties satisfied by veto by consumption.\footnote{The proof that veto by consumption with generalised antiplurality tie-breaking satisfies strong negative involvement is identical to \autoref{prop:VbCnegativeInvolvement}.}

\begin{restatable}{proposition}{GAtiebreaking}
With the Impartial Culture assumption and a fixed number of voters $n\neq 2$, as $m\ria\infty$, the expected number of winners under any voting rule with generalised antiplurality tie-breaking tends to~1. 

For $n=2$, the expected number of winners under any Pareto-efficient  voting rule with generalised antiplurality tie-breaking tends to~1.
\end{restatable}

\begin{restatable}{proposition}{HHAGAposresponse}\label{prop:HHAGAposresponse}
Veto by consumption with generalised antiplurality tie-breaking satisfies positive responsiveness.
\end{restatable}

\subsection{Comparison}

We illustrate the results of the previous sections by comparing the properties satisfied by several core-consistent voting rules in \autoref{tab:comparing_voting_rules}. We compare the following rules:
\begin{itemize}
    \item $\Core$: the veto core itself.
    \item $\CS_\text{Iter}$: the iterative core, obtained by finding the veto core of the veto core until a fixed point is reached.
    \item $\CR_\text{GA}$: $\CR_\rho$ where $\rho$ ranks the candidates by their generalised antiplurality score.
    \item $\CR_\text{Lex}$: $\CR_\rho$ where $\rho$ ranks the candidates in lexicographical order. Note that this is identical to~$\CS_\text{Lex}$.
    \item UFT: up-front tokens, voting by veto tokens with the same choice of $r,t,L$ as in \autoref{ex:tokens}.
    \item VbC: Veto by consumption.
    \item $\text{VbC}_\text{GA}$: Veto by consumption with generalised antiplurality tie-breaking.
\end{itemize}
If a result in the table does not follow from the preceding sections, a proof can be found in \autoref{app:specificrules}.

\begin{table}
\caption{Properties of specific core-consistent voting rules.}\label{tab:comparing_voting_rules}
    {\small
    \begin{center}
    \begin{tabular}{l l l l l l l l}
    \toprule
         &  $\Core$ & $\CS_{\text{Iter}}$ & $\CR_{\text{GA}}$ & $\CR_{\text{Lex}}$ &UFT& VbC & $\text{VbC}_\text{GA}$\\
         \midrule
         Anonymity&Yes&Yes&Yes&Yes&No&Yes&Yes\\
         Neutrality&Yes&Yes&Yes&No&Yes&Yes&Yes\\
         Monotonicity&Yes&No&Yes&Yes&Yes&Yes&Yes\\
         Positive responsiveness&No&No&Yes&Yes&No&No&Yes\\        Independence of UL&No&No&No&No&No&Yes&Yes\\            
         Voter adaptability&No&No&No&No&No&No&No\\
         Positive involvement&Yes&No&No&No&No&No&No\\
         Negative involvement&Yes&No&No&No&?&Yes&Yes\\
         Strong negative involvement&No&No&No&No&?&Yes&Yes\\
         Expected size, $n\ria\infty$&$m/2$&1&1&1&$\leq m/2$&1&1\\
         Expected size, $m\ria\infty$&$\infty$&$\leq n$&1&1&$\leq n$&$\leq n$& 1\\
         \bottomrule
    \end{tabular}
    \end{center}}
    \vspace{0.2cm}
\end{table}

\section{Manipulating the core}

The Gibbard-Satterthwaite framework of strategic voting presupposes a resolute voting rule, i.e.\ one that always outputs a single winner. As no anonymous rule can have this property, the standard approach in studying manipulation is to assume that ties are broken either lexicographically, in favour of the manipulator, or against the manipulator. Such an approach is reasonable if ties are a rare occurrence, but given the tendency of the veto core to be large, in our context such an approach would lead to a tangible deviation from neutrality or anonymity. Instead, we need to evaluate the manipulator's preferences on the outcome sets directly. \citet{Duggan1992, Duggan2000} considered two kinds of preference extensions:\footnote{\citet{Duggan1992, Duggan2000} use a more complicated definition in their paper, but it is equivalent to the pessimist/optimist definition found in \citet{Taylor2002}.} manipulation by an optimist, who evaluates a set by the best element, and a pessimist, who evaluates a set by the worst. While these extensions may not seem very imaginative, they are enough to derive an impossibility result -- any voting rule that is immune to both types of manipulation and includes all singletons in its range is dictatorial.

Conceptually, manipulating the veto core in this framework is simple. The veto core consists of all candidates that are not blocked; therefore, an agent's strategies are limited to either blocking a candidate they would not sincerely, or refusing to block a candidate they would have otherwise. In the case of pessimistic manipulation, this gives us a polynomial time algorithm.

\begin{lemma}\label{lem:topset}
The \emph{top set} of candidate $c$ in order $P_i$ is the set of all candidates ranked above $c$.

If $c$ is blocked in $(P_i,P_{-i})$ with top set $X$, and $X\subseteq Y$, then for any $P_i'$ with top set $Y$, $c$ is blocked in $(P_i',P_{-i})$.
\end{lemma}
\begin{proof}
Suppose $c$ is blocked in $(P_i,P_{-i})$ with top set $X$. Let $P_i'$ be an arbitrary order where the top set of $c$ is $Y$, $X\subseteq Y$.

Let $T$ be a coalition blocking $c$ in $(P_i,P_{-i})$ with blocking set $B$. If $i\notin T$, then $T$ is a blocking coalition regardless of what $i$ votes, and remains a blocking coalition in $(P_i',P_{-i})$. If $i\in T$, then observe that $B\subseteq X\subseteq Y$, so $i$ considers all the candidates in $B$ to be better than $c$. The other voters' preferences are unchanged, so they do too. Thus $T$ is still a blocking coalition in $(P_i',P_{-i})$.
\end{proof}

\begin{theorem}\label{theorem:pessimist}
It can be determined whether the veto core is manipulable by a pessimist in polynomial time.
\end{theorem}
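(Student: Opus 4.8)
The plan is to reduce the question to a greedy search whose every step is an application of the blocking‑detection algorithm behind the previous theorem. First I would pin down what a profitable pessimistic deviation is. Write the manipulator $i$'s true preferences as $a_1\succ_i\cdots\succ_i a_m$, compute the true core $C$ (by the preceding theorem), and let $a_{j^\ast}$ be the $\succ_i$-worst candidate in $C$, i.e.\ the pessimist's current outcome. Since a pessimist ranks outcome sets by their $\succ_i$-minimum and the core is never empty, a report $\succ_i'$ is a strict improvement if and only if $\mathrm{core}(\succ_{-i},\succ_i')$ avoids the ``bad tail'' $\set{a_{j^\ast},\dots,a_m}$ entirely — equivalently, if and only if \emph{every} candidate of the bad tail is blocked in $(\succ_{-i},\succ_i')$. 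So it suffices to decide, in polynomial time, whether some report of $i$ blocks the whole bad tail; running this per‑voter test over all $i\in\CV$ then decides whether the profile is manipulable by a pessimist at all.

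Second, I would analyse how $i$'s report can influence blocking. Whether a candidate $c$ is blocked by a coalition \emph{not} containing $i$ does not depend on $\succ_i'$; these ``unconditionally blocked'' candidates can be found once by running a trivial adaptation of the core algorithm to the coalitions inside $\CV\setminus\{i\}$ — only the veto powers $\ceil{m|T|/n}-1$ of coalitions enter the reduction, and these depend on the global $n$ and $m$ only. For the remaining bad‑tail candidates $i$ must belong to every blocking coalition, so the blocking set $B$ must lie among the candidates $i$ ranks above $c$; hence the only feature of $\succ_i'$ relevant to blocking such a $c$ is the ``upper set'' $Z_c=\set{d:d\succ_i' c}$, and the property ``$c$ can be blocked when its upper set is $Z$'' is \emph{monotone} in $Z$ (a witnessing $B\sub Z$ keeps witnessing for any superset). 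Monotonicity lets me argue it is without loss of generality to restrict to reports of the shape: $a_1,\dots,a_{j^\ast-1}$ at the top in any order, then the unconditionally blocked bad‑tail candidates, then the remaining bad‑tail candidates $P$ in some order — pushing those blocks upward only enlarges every relevant $Z_c$ and never un‑blocks anything. The sole remaining choice is the internal order of $P$.

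Third, I would handle the ordering of $P$ greedily. When $P$ is built up one candidate at a time, the candidate placed $s$-th must be blockable with upper set equal to ``everything already above it''; by monotonicity together with a routine exchange argument, a feasible order exists iff the greedy rule ``append any as‑yet‑unplaced $c\in P$ that is currently blockable, else declare failure'' succeeds. Each greedy test — ``given upper set $Z$, can $c$ be blocked?'' — I would answer with the biclique reduction of the previous theorem, adapted as follows: take $r$ clones of every voter (including $i$) on the left, $t$ clones of each candidate in $Z$ on the right, join every clone of $i$ to every candidate‑clone (encoding that $i$'s report will place $Z$ above $c$), and for $j\neq i$ join a clone of $j$ to a clone of $d\in Z$ iff $d\succ_j c$; then, exactly as before, a biclique of total size at least $tm$ corresponds (via Lemma~\ref{lem:moulin}) to a blocking coalition, and — crucially, since $c$ is \emph{not} unconditionally blocked — any such biclique is forced to use clones of $i$, hence encodes a legitimate coalition containing $i$. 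Because this graph has $rn<tm$ left and at most $t(m-1)<tm$ right vertices, Proposition~\ref{prop:biclique} (or, for a better running time, the maximum‑flow reformulation from the previous proof) decides each test in polynomial time, and only $O(m^2)$ tests are needed per candidate manipulator.

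I expect the second step to be the real obstacle: spotting that $i$'s report enters only through the upper sets $Z_c$, that blockability is monotone in them, and therefore that the block‑structured reports and the greedy ordering are without loss of generality. Once that is in hand, everything else is bookkeeping on top of Lemma~\ref{lem:moulin}, Proposition~\ref{prop:biclique} and the core algorithm; the one subtlety to keep track of is that ``blocked by a coalition avoiding $i$'' — whose blocking set need not sit inside $Z$ — must be detected separately up front, so that during the greedy phase only the $Z$-restricted graph is ever needed, and that the pessimist requires a \emph{strict} gain, which is automatic here since any successful deviation leaves a non‑empty core confined to $\set{a_1,\dots,a_{j^\ast-1}}$.
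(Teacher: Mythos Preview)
Your proposal is correct and follows essentially the same approach as the paper: observe that whether a candidate is blocked depends only on its upper set in $i$'s ballot and is monotone in that set, put the good candidates on top without loss of generality, and then greedily append bad candidates that are blockable given the current top set, with an exchange argument for correctness of the greedy. The paper's version is slightly leaner in that it never separates out the ``unconditionally blocked'' candidates nor restricts the right side of the biclique graph to $Z$ --- it simply constructs a full ballot with the desired top set and calls the core algorithm as a black box, which handles coalitions with and without $i$ uniformly; your preprocessing step and restricted graph are harmless but unnecessary.
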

\begin{proof}
Consider a profile $P$ where $c$ is the worst element in the veto core for the manipulator, who will be voter 1 for convenience. Call all the candidates 1 considers to be better than $c$ the good candidates, all the candidates at least as bad (including $c$ itself) the bad candidates.

The problem of pessimistic manipulation is thus finding a $P_1'$ such that every bad candidate in $(P_1',P_{-1})$ is blocked. Call such a $P_1'$ a strategic vote.

Observe that as a consequence of Lemma~\ref{lem:topset}, if $c$ is blocked in $(P_1',P_{-1})$ with top set $X$, then $c$ is blocked in \emph{every} $(P_1^*,P_{-1})$ with top set $X$. We can check whether this is the case in polynomial time by constructing an arbitrary preference order $P_1^*$ with the candidates in $X$ ranked first, then $c$, and running the veto core algorithm on $(P_1^*,P_{-1})$. We will refer to this as \emph{checking whether $c$ is blocked with top set $X$}.

The algorithm functions by progressively filling the ballot from the top down. At step $i$ of the algorithm, let $G_i$ be the set of candidates already on the ballot. To initialise the algorithm, construct a partially filled ballot with all the good candidates ranked at the top in the same order as in $P_1$. Thus $G_0$ is the set of good candidates. To advance from step $i$ to $i+1$ of the algorithm check if there exists a candidate  $b\in \CC\setminus G_i$ that is blocked with top set $G_i$. If so, rank $b$ directly below the candidates in $G_i$ and let $G_{i+1}=G_i\cup\set{b}$. If not, assert that no strategic vote exists.

First, observe that this algorithm will run for $O(m)$ steps, and each step will involve $O(m)$ calls to the algorithm computing veto core, so the algorithm is in P.

Next, we show that if the algorithm returns a vote it is indeed a strategic vote. To demonstrate this we will show that the algorithm has the invariant property that every candidate in $G_i$ is either a good candidate or is blocked. At the beginning of the algorithm, $G_0$ consists solely of the good candidates. After step $i$, we add some candidate $b$ who is blocked with top set $G_i$. We do not change the top set of any candidate in $G_i$, so those that were blocked at step $i$ remain blocked at step $i+1$. Thus if the algorithm returns a vote, all the candidates on the ballot are either good candidates or blocked.

Finally, we show that if there exists a strategic vote $P_1'$, the algorithm will find one.

We first claim that if there exists a strategic vote, there exists a strategic vote with all the good candidates ranked at the top of the ballot in the same order as in $P_1$. Let $P_1'$ be a strategic vote where some good $g$ is ranked behind a bad $b$. Obtain $P_1^*$ by swapping $g$ and $b$. We increase the top set of $b$, so if $b$ was blocked it remains blocked. Thus $P_1^*$ is still strategic, and we can repeat this operation until all the good candidates bubble to the top. Once we have a ballot where all the good candidates are on the top, their internal order does not matter since it will not change the top set of any bad candidate. There is thus no generality lost in assuming that the good candidates are ranked as in $P_1$.

Now we proceed by contradiction. Suppose the algorithm fails to return a vote. This must mean that at step $i$ of the algorithm none of the remaining bad candidates are blocked with top set $G_i$. By the above claim, we can assume that $P_1'$ ranks the set of good candidates $G_0$ at the top in the same order as $P_1$ and thus has bad candidates in the next $i$ positions. Let $X$ be the set of these $i$ bad candidates, and $X'=G_i\setminus G_0$ be the set of candidates ranked in the corresponding positions by our algorithm. Observe that $X\neq X'$ -- otherwise the algorithm could choose the same candidate in the $i$th step as in $P_1'$. Thus there exists a $b\in X, b\notin X'$. Of all such $b$, choose one that is ranked highest in $P_1'$. Let $Y$ be the set of bad candidates (possibly empty) ranked above $b$ in $P_1'$. Since $b$ is the highest ranked bad candidate in $P_1'$ that is not in $X'$, it must be that $Y\sub X'$. But that is a contradiction because if $b$ is blocked with top set $G_0\cup Y$ in $P_1'$, it will certainly be blocked with top set $G_i=G_0\cup X'$ in the algorithm.
\end{proof}

In principle, it is also easy to check whether optimistic manipulation is possible -- by Lemma~\ref{lem:topset}, if it is possible for voter 1 to add some $c$ to the veto core, then it must be possible to add $c$ to the veto core by ranking $c$ first, and the other candidates in any order. However, it turns out that this polynomial time algorithm can be reduced to ``return false''; the veto core can never be manipulated by an optimist.

\begin{definition}
Voter $i$ is a blocker of $c$ if:
\begin{enumerate}
    \item There exists a coalition $T$, $i\in T$, such that $T$ blocks $c$.
    \item For all coalitions $T'$, $T'\setminus\set{i}$ does not block $c$.
\end{enumerate}
\end{definition}

\begin{lemma}\label{lem:blockercore}
Let $i$ be a blocker of $c$. The veto core contains at least one $a$, $a\succ_i c$.
\end{lemma}
\begin{proof}
Recall that voting by veto tokens elects a candidate in the veto core, for any order of the veto tokens (\autoref{prop:vetobytokens}). Thus if there exists an order of the tokens under which $a\succ_i c$ is elected, then $a$ is in the veto core. We will construct such an order.

Since $i$ is a blocker of $c$, no coalition of voters without $i$ can block $c$. Thus in the modified problem $P'$ where $i$ is replaced with a voter that ranks $c$ first (call this voter $i'$), $c$ must be in the veto core. By \Cref{lem:tokenscore}, there exists an order of the tokens in $P'$ such that at the end of the process $c$ has at least one clone remaining. Call this order $L$.

We claim that we can without loss of generality assume that in $L$, the tokens of $i'$ are cast last. To see this, suppose $i'$ casts a token at position $k$ in $L$, and uses it to veto a clone of candidate $d$. Consider what happens when we move this token to the back of $L$ to obtain $L'$. This will not change the behaviour of any voter before $k$, since the set of clones of candidates remaining will not change. Voter $j$ moving at position $k$ will face a situation where $d$ has one more clone than under $L$. Voter $j$ will either veto the same clone they did under $L$ or veto the extra clone of $d$. If they veto $d$, then all subsequent voters will behave the same under $L'$ as they did under $L$, and $c$ will have a clone remaining. Otherwise, $j$ will veto the same clone as they did under $L$, and we can repeat the argument for the next voter. Thus at the end $c$ will have a clone remaining.

Now apply the order $L'$ to the original problem. After the voters other than $i$ have cast their vetoes, $c$ still has a clone remaining. Since $c$ is not in the veto core, by the time $i$ finished casting his vetoes, $c$ will be eliminated. However, the clones of $i$ will not veto any candidate better than $c$ until $c$ is eliminated. Hence the winner must be some $a\succ_i c$.
\end{proof}
\begin{theorem}\label{thm:strategyproofoptimist}
The veto core is strategy-proof for optimists.
\end{theorem}
\begin{proof}
The veto core consists of all candidates that are not blocked, thus the only way $i$ can add a new candidate to the veto core is to change their vote so that some candidate is no longer blocked. Call this candidate $c$. In order to do this, $i$ must be a blocker of $c$ (as otherwise $c$ would be blocked no matter what $i$ does). However, if $i$ is a blocker of $c$, then the veto core already contains something better than $c$ (\autoref{lem:blockercore}).
\end{proof}

\section{Acknowledgements}

Support from the Basic Research Program and the ``Priority 2030'' program at HSE University is gratefully acknowledged.

\appendix

\section{Omitted proofs}\label{app:proofs}
\moulin*

\begin{proof}
From $rn=tm-\alpha$ we obtain: $$\frac{rk}{t}=\frac{km}{n}-\frac{k\alpha}{tn}.$$
Case one: $\frac{km}{n}$ is an integer. In this case:
\begin{align*}
    \floor{\frac{rk}{t}}&=\floor{\frac{km}{n}-\frac{k\alpha}{tn}},\\
    &=\ceil{m\frac{k}{n}}-1.
    \end{align*}
Case two: $\frac{km}{n}=\floor{\frac{km}{n}}+\epsilon, \epsilon\in\set{\frac{1}{n},\dots,\frac{n-1}{n}}$. In this case $\ceil{m\frac{k}{n}}-1=\floor{\frac{km}{n}}$, and:
\begin{align*}
    \floor{\frac{rk}{t}}&=\floor{\frac{km}{n}-\frac{k\alpha}{tn}},\\
    &=\floor{\frac{km}{n}}+\floor{\epsilon -\frac{k\alpha}{tn}},\\
    &\geq\floor{\frac{km}{n}}+\floor{\epsilon -\frac{1}{n}},\\
    &\geq\floor{\frac{km}{n}}.
\end{align*}
The equality follows because it is clear that: $$\floor{\frac{km}{n}-\frac{k\alpha}{tn}}\leq\floor{\frac{km}{n}}.$$
\end{proof}

\bicliquesize*
\begin{proof}
Let $G=(L,R)$. First observe that we can without loss of generality assume that $k>\max(|L|,|R|)$. Suppose otherwise, and $k=\max(|L|,|R|) - c$ for $c\geq 0$. Construct a bipartite graph $G=(L',R')$ consisting of all the vertices and edges of $G$, plus another $c+1$ vertices in $L'$ that are adjacent to every vertex in $R'$, and $c+1$ vertices in $R'$ that are adjacent to every vertex in $L'$. Clearly $G$ contains a biclique $K_{x,y}$ if and only if $G'$ contains a biclique $K_{x+c+1,y+c+1}$, and if we let $k' = k+2c+2$ then $k'=\max(|L|,|R|) + 2c +2 = \max(|L'|,|R'|) + c + 1 >\max(|L'|,|R'|)$.

By K\H{o}nig's theorem, the size of the maximum matching, $z$, in a bipartite graph is equal to the size of the smallest vertex cover. The complement of the smallest vertex cover is the maximum independent set. The size of the maximum independent set is thus $|L|+|R|-z$, and is equal to  the size of the maximum biclique in the bipartite-complement of the graph, provided that the size of the maximum independent set is greater than $\max(|L|,|R|)$ (otherwise, the independent set could consist entirely of one of the partitions of the graph).

Therefore take $G$, construct its bipartite-complement $\overline{G}$ and find the size of the maximum matching, $z$, which is necessarily no greater than $\min(|L|,|R|)$. The size of the maximum independent set in $\overline{G}$ is then $|L|+|R|-z \geq \max(|L|,|R|)$. If the inequality is strict, then the size of the maximum biclique $K_{x,y}$ in $G$ is $x+y=|L|+|R|-z$ and we simply compare this value with~$k$. If $|L|+|R|-z = \max(|L|,|R|)$, then observe that the maximum independent set in $\overline{G}$ is necessarily smaller than $k$. Since a biclique in $G$ corresponds to an independent set in $\overline{G}$, this means $G$ cannot contain a bliclique of size $k$ or larger.
\end{proof}

\ICcoresize*
\begin{proof}
Fix a set of candidates, $B$, $|B|=m-q, 1\leq q\leq m-1$, and fix a candidate~$c$ not in $B$. We will show that the probability of the existence of a coalition blocking $c$ with blocking set $B$ is $\frac{1}{2}$ if $q=1$, and 0 otherwise. This will establish 1 and 2.

Let $p(q,m)$ be the probability that a voter believes every candidate in $B$ is better than $c$. If $q=1$, then $c$ must be ranked last. There are $(m-1)!$ possible preference orders the voter can have, so $p(1,m)=\frac{(m-1)!}{m!}=\frac{1}{m}$.

If $q>1$, then the voter must rank $c$ between positions $m-q+1$ and $m$ (any higher and the candidates in $B$ could not possibly all fit above $c$). There are $(m-1)!$ orders with $c$ ranked last, and less than $(m-1)!$ with $c$ ranked in any of the other $q-1$ positions. Thus there are more than $(m-1)!$, and less than $q(m-1)!$ preference orders that the voter could have. This gives us $\frac{1}{m}<p(q,m)<\frac{q}{m}$.

Recall that the veto power of a coalition of size $k$ is $\ceil{m\frac{k}{n}}-1$. Thus a coalition can veto $q$ candidates if and only if the size of the coalition is strictly larger than $\frac{nq}{m}$. It follows that $B$ is a blocking set for $c$ if and only if more than $\frac{nq}{m}$ voters believe every candidate in $B$ is better than $c$. Hence, by the Central Limit Theorem, the limit of the probability that $B$ is a blocking set for $c$ is:
\begin{equation*}
    1-\lim_{n\rightarrow\infty}{F\left( \frac{\frac{nq}{m}-np(q,m)}{\sqrt{np(q,m)(1-p(q,m))}} \right)},
\end{equation*}
which is equal to $\frac{1}{2}$ for $q=1$ and 0 otherwise.
\end{proof}

\ICcoresizefixedn*
\begin{proof}
Fix the number of voters, $n\geq 2$. For each profile with $m\geq n$ candidates, let $X_m$, $|X_m| = x_m$, be the set of candidates  which every voter ranks in the top $\floor{\frac{m}{n}}$ positions. Recall from \autoref{def:vetocore} it follows that a candidate in $X_m$ can only be blocked by the grand coalition. Let $Y_m\subseteq X_m$, $|Y_m|=y_m$, be the Pareto-optimal candidates in $X_m$. These candidates are not blocked by the grand coalition, and are hence in the veto core. We shall show that $\EE[y_m]\ria\infty$ as $m\ria\infty$.

Note that $x_m$ takes values from $\{0,1,\ldots,\floor{\frac{m}{n}}\}$ and its expectation is
\begin{equation*}
    \EE[x_m]=m\left(\frac{\floor{\frac{m}{n}}}{m}\right)^n.
\end{equation*}
Denote
\begin{equation*}
    a_m=\floor{\frac{m}{n}}\left(\frac{\floor{\frac{m}{n}}}{m}\right)^n.
\end{equation*}
From
\begin{align*}
    \EE[x_m]&=\sum\limits_{k:k\leq a_m}k\PP[x_m=k]+\sum\limits_{k:k>a_m}k\PP[x_m=k],\\ 
    &\leq\sum\limits_{k:k\leq a_m}a_m\PP[x_m=k]+\sum\limits_{k:k>a_m}\floor{\frac{m}{n}}\PP[x_m=k],\\ 
    &=a_m(1-\PP[x_m>a_m])+\floor{\frac{m}{n}}\PP[x_m>a_m],
\end{align*}
we have that 
\begin{align*}
    \PP[x_m>a_m]&\geq\frac{\EE[x_m]-a_m}{\floor{\frac{m}{n}}-a_m},\\
    &\geq\frac{\left(m-\floor{\frac{m}{n}}\right)\left(\frac{\floor{\frac{m}{n}}}{m}\right)^n}{\floor{\frac{m}{n}}},\\
    &\geq\frac{n\floor{\frac{m}{n}}-\floor{\frac{m}{n}}}{\floor{\frac{m}{n}}}\left(\frac{\frac{m}{n}-1}{m}\right)^n,\\
    &\geq(n-1)\left(\frac{1}{n}-\frac{1}{m}\right)^n.
\end{align*}

For a fixed $k\in\NN$, consider all profiles with $x_m=k$. For each such a profile, there exist $(k!)^n$ profiles which obtained from each other by permutations of these $k$ candidates. Let the first voter rank them in the order $b_1\succ b_2\succ\ldots\succ b_k$. Candidate $b_1$ is always Pareto-optimal. Candidate $b_2$ is higher than $b_1$ in the preference of the second voter with frequency $1/2$ and hence $b_2$ is Pareto-optimal with frequency at least $1/2$. Similarly, $b_3$ is Pareto-optimal with frequency at least $1/3$, and so on. Hence, the conditional expectation of $y_m$ is
\begin{equation*}
    \EE[y_m\mid x_m=k]\geq 1+\frac{1}{2}+\ldots+\frac{1}{k}.
\end{equation*}

This implies that
\begin{align*}
\EE[y_m]=\sum\limits_k{\EE[y_m|x_m=k]\cdot\PP[x_m=k]}&\geq
    \sum\limits_{k:k>a_m}{\left(1+\frac{1}{2}+\ldots+\frac{1}{k}\right)\PP[x_m=k]},\\
    &\geq\left(1+\frac{1}{2}+\ldots+\frac{1}{\floor{a_m}}\right)\PP[x_m>a_m],\\
    &\geq\left(1+\frac{1}{2}+\ldots+\frac{1}{\floor{a_m}}\right)(n-1)\left(\frac{1}{n}-\frac{1}{m}\right)^n.
\end{align*}
As $m\ria\infty$, $(n-1)(\frac{1}{n}-\frac{1}{m})^n$ tends to a constant, while $(1+\frac{1}{2}+\ldots+\frac{1}{\floor{a_m}})$ tends to infinity.
\end{proof}

\vetobytokens*
\begin{proof}
Suppose, for contradiction, that voting by veto tokens elects a candidate $c$, but $c$ is blocked by voters $T$, $|T|=k$. This means there must exist a set of candidates $B$, $|B|=m-v(T)$, such that every voter in $T$ believes every candidate in $B$ is better than $c$. Let $W=\CC\setminus B$, and $W_i$ the set that voter $i\in T$ believes to be at least as bad as $c$. Observe that $W_i\subseteq W$. By Lemma~\ref{lem:moulin}, the number of clones of candidates in~$W$ is equal to $t|W|=t v(T)=t\floor{\frac{rk}{t}}$.

Since a clone of $c$ remains by the end of the algorithm, then it means that each $i\in T$ cast all $r$ of their tokens against a clone of a candidate in $W_i\subseteq W$. Yet these $rk$ vetoes have been insufficient to eliminate all the clones of $c\in W$ and hence $rk<t\floor{\frac{rk}{t}}$, which is a contradiction.
\end{proof}

\vetobyconsumption*
\begin{proof}
Suppose, for contradiction, that veto by consumption elects a candidate $c$, but $c$ is blocked by voters $T$, $|T|=k$. This means there must exist a set of candidates $B$, $|B|=m-v(T)=m-\ceil{\frac{mk}{n}}+1$, such that every voter in $T$ believes every candidate in $B$ is better than $c$. For voter $i\in T$, let $W_i$ be the set of candidates $i$ considers to be at least as bad as $c$. Observe that $W_i\subseteq \CC\setminus B$.

Observe that as long as $c$ is present, every agent will eat a candidate they consider to be at least as bad as $c$. Since we have assumed that $c$ is eaten last, it follows that the $k$ voters in $T$ eat only from $W=\bigcup W_i\subseteq \CC\setminus B$ for the duration of the algorithm, and the capacity of $W$ is $|W|\leq \ceil{\frac{mk}{n}}-1<\frac{mk}{n}$. However, it will take these $k$ voters strictly less than $\frac{m}{n}$ units of time to eat all of $W$, but the algorithm runs for exactly $\frac{m}{n}$ units of time, which is a contradiction.
\end{proof}

\vetocoremonotone*
\begin{proof}
    Suppose $c\in\Core(P)$. Let $P'$ be obtained from $P$ by having voter $i$ raise $c$ one position in his voting order.
    
    We prove by contradiction. If $c\notin\Core(P')$, then that must mean there exists a coalition $T$ and a blocking set $B$ such that:
\begin{align*}
    &\forall b\in B,\forall j\in T:\ b P_j' c,\\
    &m-|B|\leq v(T).
\end{align*}
Since in $P$ $c$ does not improve his position vis-\`a-vis any other candidate, $c$ is blocked by $T$ in~$P$.
\end{proof}

\rrmonotone*
\begin{proof}
Consider a candidate $c\in\CR_\rho(P)$. This means that $c$ is top-ranked in $\Core(P)$ by some ranking $R\in \rho(P)$. The set of candidates ranked no higher than $c$, $\underline{R}(c)$, is such that $\Core(P)\subseteq \underline{R}(c)$. Let $P'$ be obtained from $P$ by having a voter raise $c$ one position in his voting order.

We first claim that $\Core(P')\subseteq\Core(P)$. Observe that if $a\notin\Core(P)$, then that must mean there exists a coalition $T$ and a blocking set $B$ such that:
\begin{align*}
    &\forall b\in B,\forall j\in T:\ b P_j a,\\
    &m-|B|\leq v(T).
\end{align*}
Since in $P'$ $a$ does not improve his position vis-\`a-vis any other candidate, $a$ is still blocked by $T$ in~$P'$.

Since the veto core is monotonic, $c\in\Core(P')$. Since $r$ is monotonic, $\underline{R}(c)\subseteq \underline{R'}(c)$ for some ranking $R'\in \rho(P')$. Hence, $$\Core(P')\subseteq\Core(P)\subseteq \underline{R}(c)\subseteq \underline{R'}(c),$$
$c$ is top-ranked in $\Core(P')$ under $R'$ and $c\in\CR_\rho(P')$.
\end{proof}

\rrpositiveresponsive*
\begin{proof}
Suppose candidate $c\in\CR_\rho(P)$. This means that $c$ is top-ranked in $\Core(P)$ by some ranking $R\in \rho(P)$. The set of candidates ranked no higher than $c$, $\underline{R}(c)$, is such that $\Core(P)\subseteq \underline{R}(c)$. Let $P'$ be obtained from $P$ by having a voter raise $c$ one position in his voting order.

As we showed in \autoref{prop:rankmonotonic}, $c\in \Core(P')\subseteq\Core(P)$. Since $\rho$ is positively responsive, $\underline{R}(c)\subseteq \underline{R'}(c)$ for each ranking $R'\in \rho(P')$. Hence, $$\Core(P')\subseteq\Core(P)\subseteq \underline{R}(c)\subseteq \underline{R'}(c),$$
$c$ is top-ranked in $\Core(P')$ by each ranking in $\rho(P')$ and hence the unique winner.
\end{proof}

\vetocoreinvolvement*

\begin{proof}
Let $P$ be a profile with $n$ voters and $P_{-i}$ be obtained from $P$ by removing voter~$i$.

For positive involvement, suppose $c\in\Core(P_{-i})$ and voter $i$ ranks $c$ first. Assume, for contradiction, that $c$ is blocked in $P$.  This means there exists a coalition $T\subseteq\CV$ and blocking set $B$ satisfying:
\begin{align*}
    &\forall b\in B,\forall j\in T:\ b \succ_j c,\\
    &m-|B|\leq\ceil{m\frac{|T|}{n}}-1\leq\ceil{m\frac{|T|}{n-1}}-1.
\end{align*}
Because $c$ is voter $i$'s top candidate, $i\notin T$ and $c$ is blocked in~$P_{-i}$. 

For negative involvement, suppose $c\in\Core(P)$ and voter $i$ ranks $c$ last. For contradiction, suppose $c\notin\Core(P_{-i})$. This means there exists a coalition $T\subseteq\CV\setminus\set{i}$ and blocking set $B$ satisfying:
\begin{align*}
    &\forall b\in B,\forall j\in T:\ b \succ_j c,\\
    &m-|B|\leq\ceil{m\frac{|T|}{n-1}} -1.
\end{align*}
Voter $i$ ranks $c$ last, so $b\succ_i c$ for all $b\in B$. Thus there exists a coalition $T\cup\set{i}$ for which:
\begin{align*}
    &\forall b\in B,\forall j\in T\cup\set{i}:\ b \succ_j c,\\
    &m-|B|\leq\ceil{m\frac{|T|}{n-1}} -1\leq\ceil{m\frac{|T|+1}{n}} -1.
\end{align*}
As such, $T\cup\set{i}$ blocks $c$ in $P$.

To show that the veto core does not satisfy strong negative involvement consider a profile with voters of type $cba, cab, abc, bac$. The core remains $\set{a,b}$ even if we remove the voter that ranks $a$ last. 
\end{proof}

\VbCnegativeInvolvement*
\begin{proof}
Let $c(a,t,P)$ denote the capacity of candidate $a$ at time $t$ during the run of veto by consumption on profile $P$, and $N(a,t,P)$ the set of voters eating $a$ at time $t$.

Suppose in a profile $P$ with $m$ candidates and $n+1$ voters a candidate $a$ is a (possibly tied) winner under veto by consumption and voter $i$ ranks $a$ last. This implies that the capacity of $a$ is positive and $i$ belongs to the set of voters eating $a$ until the end of the algorithm: for each time $t<m/(n+1)$, $c(a,t,P)>0$ and $i\in N(a,t,P)$. We will show that $a$ is the unique winner in~$P_{-i}$.

Denote $t_k<m/(n+1)$ the starting time of each round $k$ during the run of the algorithm on~$P$. We will prove by induction on $k$ that for each $b\neq a$, we have
\begin{align}
    c(a,t_k,P_{-i})&\geq c(a,t_k,P),\label{neginvolvement:ainequality}\\ 
    c(b,t_k,P_{-i})&=c(b,t_k,P),\label{neginvolvement:bequality}\\
   N(a,t_k,P_{-i})&= N(a,t_k,P)\setminus{i},\label{neginvolvement:aN}\\ 
   N(b,t_k,P_{-i})&=N(b,t_k,P).\label{neginvolvement:bN}
\end{align}
In the base case of $k=1$, $t_1=0$, all capacities are equal to 1 and the only difference is that $i$ is eating $a$ in $P$ but not in~$P_{-i}$. Suppose the inductive hypothesis holds for round~$k$. Let $r_k>0$ be the duration of round $k$ during the run of the algorithm on $P$, and thus $t_{k+1}=t_k+r_k$. Then:
\begin{align*}
    c(a,t_{k+1},P_{-i})&=c(a,t_k,P_{-i})-r_k|N(a,t_k,P_{-i})|\\
    &=c(a,t_k,P_{-i})-r_k|N(a,t_k,P)|+r_k\\
    &>c(a,t_k,P)-r_k|N(a,t_k,P)|=c(a,t_{k+1},P)\geq 0,\\
    c(b,t_{k+1},P_{-i})&=c(b,t_k,P_{-i})-r_k|N(b,t_k,P_{-i})|\\
    &=c(b,t_k,P)-r_k|N(b,t_k,P)|\\
    &=c(b,t_{k+1},P)\geq 0,
\end{align*}
which establishes \ref{neginvolvement:ainequality} and \ref{neginvolvement:bequality} for round $k+1$ (if $k$ is not the last round) and for $t_{k+1}=m/(n+1)$ (if $k$ is the last round) and also shows that the set of candidates with positive capacity is the same for both profiles at the starting time of round~$k+1$. Hence, at the starting time of round~$k+1$, each voter except $i$ is eating the same candidate in both profiles and $i$ is eating $a$ in~$P$, which establishes  \ref{neginvolvement:aN} and \ref{neginvolvement:bN} for round $k+1$.

Now observe that if we apply \ref{neginvolvement:ainequality} and \ref{neginvolvement:bequality} for $t_{k+1}=m/(n+1)$, we get that $c(b,m/(n+1),P_{-i})=c(b,m/(n+1),P)=0$ and $c(a,m/(n+1),P_{-i})\geq c(a,m/(n+1),P)=0$. However, the algorithm under $P_{-i}$ will not terminate until time $m/n$. Hence $c(a,m/(n+1),P_{-i})>0$, and $a$ is the unique candidate elected as required.
\end{proof}

\rrualinvolvement*
\begin{proof}
Let $P$ be a profile with $n$ voters and $P_{-i}$ the profile with $n-1$ voters obtained by removing voter $i$ from $P$. Choose $P$ and $P_{-i}$ such that they illustrate the failure of voter adaptability/positive involvement by~$\rho$.

Consider the profiles $P'_{-i}$ and $P'$ that result from $P_{-i}$ and $P$ by adding $m(n-1)$ unanimous losers. The veto cores of $P'_{-i}$ and $P'$ consist of all Pareto-optimal candidates of $P_{-i}$ and $P$ respectively. As such, the same candidates win in $\CR_\rho(P'_{-i})$ and $\CR_\rho(P')$ as in $\rho(P_{-i})$ and $\rho(P)$ and the same property is violated.
\end{proof}

\HHAWoodall*
\begin{proof}
We will first observe that at the moment a candidate is completely eaten, at least $\frac{n}{m}$ voters are eating this candidate. Recall that the algorithm lasts for precisely $\frac{m}{n}$ units of time, so if less than $\frac{n}{m}$ voters ate $c$ throughout the algorithm, then $c$ would have non-zero capacity at the end, which is impossible.

Fix the number of candidates $m\geq 2$. We will show the next result: the probability that exactly one candidate is eaten at the end of each round tends to 1 as $n\ria\infty$. Because the number of pairs of candidates does not exceed $m(m-1)/2$, it is sufficient to prove the next statement: the probability that candidates 1 and 2 are eaten simultaneously limits to 0 as $n\ria\infty$.

Consider the beginning of an arbitrary round. Let $c_i$ be the remaining capacity of candidate $i$, $n_i$ the number of voters who were eating $i$ before the beginning of the round, and $s_i$ the number of new voters who started eating $i$ this round. A necessary condition for $c_1$ and $c_2$ to be eaten simultaneously is:
\begin{equation*}
    \frac{n_1+s_1}{c_1}=\frac{n_2+s_2}{c_2}.
\end{equation*}
Let $r$ be the number of candidates with non-zero capacities at the beginning of the round, and $N=s_1+\dots+s_r$ the number of voters who started eating a new candidate this round. Since at least one candidate was eaten in the last round, we have seen above that $N\geq\frac{n}{m}$, and hence $N\ria\infty$ as $n\ria\infty$.

It is sufficient to show that for each fixed $r\geq 2$, as $N\ria\infty$
\begin{equation*}
    \sup\limits_{c_1,c_2\in\RR_{++},n_1,n_2\in\RR_{+}}{\PP\left[\frac{n_1+s_1}{c_1}=\frac{n_2+s_2}{c_2}\right]}\ria 0.
\end{equation*}
Let $b=\frac{c_1}{c_2}>0$, and $x^b_{1N},\ldots,x^b_{NN}$ be mutually independent and identically distributed variables such that $x^b_{kN}$ is equal to 1 and $-b$ with probabilities $1/r$ each, and 0 with probability $(r-2)/r$. Observe that $y_N^b=\sum x^b_{kN} = s_1-bs_2$. Letting $d=\frac{c_1}{c_2}n_2-n_1$, the above is equivalent to:
\begin{equation*}
    \sup\limits_{b\in\RR_{++},d\in\RR}{\PP[y^b_N=d]}\ria 0.
\end{equation*}
By introducing $z^b_N=(y^b_N-\EE[y^b_N])/\sqrt{Var[y^b_N]}$ and $c=(d-\EE[y^b_N])/\sqrt{Var[y^b_N]}$, we can rewrite the condition again as:
\begin{equation*}
    \sup\limits_{b\in\RR_{++},c\in\RR}{\PP[z^b_N=c]}\ria 0.
\end{equation*}

We will prove it by contradiction. Suppose there exist a fixed $\varepsilon>0$ and an infinitely increasing sequence of natural numbers, $N_j$, $j=1,2,\ldots$, such that for each $j$
\begin{equation*}
    \sup\limits_{b\in\RR_{++},c\in\RR}{\PP[z^b_{N_j}=c]}>\varepsilon.
\end{equation*}
Then we can fix a pair of sequences $b_N>0$ and $c_N$, $N=1,2,\ldots$, such that 
\begin{equation}\label{not_limits}
    \PP[z^{b_N}_{N}=c_N]>\varepsilon\quad\text{for each $N\in\{N_1,N_2,\ldots\}$}.
\end{equation}
Consider $z_N=z^{b_N}_N, y_N=y^{b_N}_N, x_{kN}=x^{b_N}_{kN}$. Note that
\begin{align*}
    z_N=\frac{y_N-\EE[y_N]}{\sqrt{Var[y_N]}}=\sum\limits_{k=1}^N{\frac{x_{kN}-\EE[x_{kN}]}{\sqrt{Var[x_{kN}]\cdot N}}},
\end{align*}
where
\begin{align*}
    \EE[x_{kN}]&=\frac{1-b_N}{r},\\
    Var[x_{kN}]=\EE[(x_{kN})^2]-(\EE[x_{kN}])^2&=\frac{1+(b_N)^2}{r}-\frac{(1-b_N)^2}{r^2}\\
    &=\frac{(r-1)(b_N)^2+2b_N+r-1}{r^2}.
\end{align*}
Let us show that
\begin{align*}
    \left|\frac{x_{kN}-\EE[x_{kN}]}{\sqrt{Var[x_{kN}]}}\right|&\leq\frac{\max\left\{\left|1-\frac{1-b_N}{r}\right|,\left|-b_N-\frac{1-b_N}{r}\right|\right\}}{\frac{1}{r}\sqrt{(r-1)(b_N)^2+2b_N+r-1}}\\
     &=\frac{\max\{r-1+b_N,(r-1)b_N+1\}}{\sqrt{(r-1)(b_N)^2+2b_N+r-1}}\leq\frac{r}{\sqrt{r-1}},
\end{align*}
where for $0<b_N\leq 1$ the latter inequality is straightforward, whereas for $1\leq b_N$ it follows from the fact that $r-1+b_N\leq(r-1)b_N+1$ and:
\begin{equation*}
    \frac{(r-1)b_N+1}{\sqrt{(r-1)(b_N)^2+2b_N+r-1}}\leq\frac{rb_N}{\sqrt{(r-1)(b_N)^2}}=\frac{r}{\sqrt{r-1}}.
\end{equation*}
For a fixed $s>2$,
\begin{align*}
    \sum\limits_{k=1}^N{\EE\left[\left|\frac{x_{kN}-\EE[x_{kN}]}{\sqrt{Var[x_{kN}]\cdot N}}\right|^s\right]}\leq
    N\left(\frac{r}{\sqrt{r-1}}\right)^s\frac{1}{\sqrt{N^s}}=\left(\frac{r}{\sqrt{r-1}}\right)^s\frac{1}{\sqrt{N^{s-2}}},
\end{align*}
which limits to 0 as $N\ria\infty$, and hence by Lyapunov's condition of the Central Limit Theorem \citep[p. 221-227]{BureParilina13bookEnglish}:
\begin{equation*}
    \sup\limits_{x\in\RR}{|\PP[z_N\leq x]-F(x)|}\ria 0.
\end{equation*}
Then:
\begin{align*}
    \PP[z_N=c_N]\leq\PP[z_N\leq c_N]-\PP\left[z_N\leq c_N-\frac{1}{N}\right]=\\
    \PP[z_N\leq c_N]-F(c_N)+F(c_N)-F\left(c_N-\frac{1}{N}\right)+F\left(c_N-\frac{1}{N}\right)-\PP\left[z_N\leq c_N-\frac{1}{N}\right]\leq\\
    \left|\PP[z_N\leq c_N]-F(c_N)\right|+\frac{1}{\sqrt{2\pi}}\int\limits_{c_N-\frac{1}{N}}^{c_N}{e^{-\frac{t^2}{2}}dt}+\left|\PP\left[z_N\leq c_N-\frac{1}{N}\right]-F\left(c_N-\frac{1}{N}\right)\right|\leq\\
    2\sup\limits_{x\in\RR}{|\PP[z_N\leq x]-F(x)|}+\frac{1}{N\sqrt{2\pi}},
\end{align*}
which limits to 0 as $N\ria\infty$ and hence contradicts to (\ref{not_limits}).
\end{proof}

\VbCsizentwo*
\begin{proof}
    Denote $a_m$ the average number of veto by consumption winners for 2 voters and $m$ candidates. Obviously, $a_1=1$ and $a_2=3/2$. The share of profiles with the same least-preferred candidate for both voters is $1/m$, and after eating this candidate the situation is reduced to $m-1$ candidates. The share of profiles with different least-preferred candidates for the two voters is $(m-1)/m$, and after eating these two candidates the situation is reduced to $m-2$ candidates. Hence, we have the next recurrent formula:
    \begin{align*}
        a_m=\frac{1}{m}a_{m-1}+\frac{m-1}{m}a_{m-2},
    \end{align*}
    which implies that:
    \begin{align*}
        a_m-a_{m-1}&=-\frac{m-1}{m}(a_{m-1}-a_{m-2})\\
        &=\left(-\frac{m-1}{m}\right)\left(-\frac{m-2}{m-1}\right)(a_{m-2}-a_{m-3})\\
        &=(-1)^2\frac{(m-2)}{m}(a_{m-2}-a_{m-3})\\
        &=(-1)^m\frac{2}{m}(a_2-a_1)\\
        &=\frac{(-1)^m}{m}.
    \end{align*}
    The proposition follows from the fact that:
    \begin{align*}
        a_m&=a_1+(a_2-a_1)+\ldots+(a_m-a_{m-1})\\
        &=1+\frac{1}{2}-\frac{1}{3}+\ldots+\frac{(-1)^m}{m}.
    \end{align*}
\end{proof}

\begin{restatable}{lemma}{GAlargem}\label{lemma:GA-large-m}
With the Impartial Culture assumption and a fixed number of voters $n\neq 2$, as $m\ria\infty$, the expected number of tied pairs of candidates under generalised antiplurality tends to~$0$. 
    
For $n=2$, the expected number of Pareto-optimal tied pairs of candidates under generalised antiplurality tends to~$0$.
\end{restatable}
\begin{proof} 

Fix $n\geq 3$. Consider a number of candidates~$m$ and a pair of candidates $a,b$. For each voter $i$, denote $a_i$ the position at which this voter ranks~$a$. Generalised antiplurality ranks $a$ and $b$ equally only if $\set{a_1,\ldots,a_n}=\set{b_1,\ldots,b_n}$. Since voter $i$ ranks $a$ at $a_i$, in order for $a$ and $b$ to be tied, voter $i$ must rank $b$ at one of the positions in $\set{a_1,\ldots,a_{i-1},a_{i+1},\ldots,a_n}$, which happens with probability no higher than $(n-1)/(m-1)$. The probability that every voter ranks $b$ at one of the positions in $\set{a_1,\ldots,a_n}$ is no higher than $(n-1)^n/(m-1)^n$. The total number of pairs of candidates is $m(m-1)/2$. Hence, the expected number of tied pairs of candidates is no higher than:
    $$\frac{m(m-1)}{2}\frac{(n-1)^n}{(m-1)^n}=\frac{m(n-1)^n}{2(m-1)}\frac{1}{(m-1)^{n-2}},$$
which tends to 0, as $m\ria\infty$.

Fix $n=2$. Consider $m\geq 4$ and a pair of candidates~$a,b$. They are tied under generalised antiplurality if and only if one voter ranks $a$ and $b$ at positions $x$ and $y$, whereas the second voter at positions $y$ and $x$, respectively, which happens with probability $1/m(m-1)$. Let us estimate $P_m$, the conditional probability that $a$ is Pareto-optimal provided that $a$ and $b$ are tied. Observe that $a$ and $b$ occupy each of $m(m-1)/2$ position pairs ($1\leq x<y\leq m$) with equal probability, so the probability of choosing a specific pair $(x,y)$ is $\frac{2}{m(m-1)}$. If $x=1$ then $a$ is definitely Pareto optimal, and there are $m-1$ pairs $(x,y)$ with $x=1$. Consider $x\geq 2$. Candidate $a$ is Pareto-optimal if and only if one voter ranks $a$ at position $x$ and candidates $c_1,\ldots,c_{x-1}$ above position $x$, whereas the second voter ranks $c_1,\ldots,c_{x-1}$ below position $y$, which happens with probability:
\begin{align*}
    \frac{(m-y)(m-y-1)\ldots(m-y-x+2)}{(m-2)(m-3)\ldots(m-x)}.
\end{align*}
Hence,
\begin{align*}
    P_m=\frac{2}{m(m-1)}\left(m-1+\sum_{x=2}^{\floor{m/2}}\sum_{y=x+1}^{m-x+1}\frac{(m-y)\ldots(m-y-x+2)}{(m-2)\ldots(m-x)}\right)
\end{align*}
and the probability that $a$ and $b$ are tied and Pareto-optimal is not higher than $P_m/m(m-1)$. Because the number of pairs of candidates is $m(m-1)/2$, the expected number of Pareto-optimal tied pairs of candidates is not higher than
\begin{align*}
    \frac{m(m-1)}{2}\frac{P_m}{m(m-1)}=\frac{P_m}{2}&=\frac{1}{m}+\frac{1}{m(m-1)}\sum_{x=2}^{\floor{m/2}}\sum_{y=x+1}^{m-x+1}\frac{(m-y)\ldots(m-y-x+2)}{(m-2)\ldots(m-x)}\\
    &\leq \frac{1}{m}+\frac{1}{(m-2)^2}\sum_{x=2}^{\floor{m/2}}\sum_{y=x+1}^{m-x+1}\left(\frac{m-y}{m-2}\right)^{x-1}\\
    &\leq \frac{1}{m}+\frac{1}{(m-2)^2}\sum_{x=2}^{\floor{m/2}}\sum_{y=x+1}^{m-x+1}\left(\frac{m-x-1}{m-2}\right)^{x-1}\\
     &\leq \frac{1}{m}+\frac{1}{(m-2)^2}\sum_{x=2}^{\floor{m/2}}(m-2x+1)\left(\frac{m-x-1}{m-2}\right)^{x-1}\\
     &\leq \frac{1}{m}+\frac{1}{(m-2)}\sum_{x=2}^{\floor{m/2}}\left(\frac{m-x-1}{m-2}\right)^x\\
     &\leq \frac{1}{m}+\frac{1}{(m-2)}\sum_{x=2}^{m-2}\left(1-\frac{x-1}{m-2}\right)^x\\
     &= \frac{1}{m}+\frac{1}{(m-2)}\sum_{z=1}^{m-3}\left(1-\frac{z}{m-2}\right)^{z+1}\\
     &= \frac{1}{k+2}+\frac{1}{k}\sum_{z=1}^{k-1}\left(1-\frac{z}{k}\right)^{z+1},
\end{align*}
where $k=m-2$. We will now show that the above tends to 0 as $k\ria\infty$. 

For each $\varepsilon>0$, we can choose an integer $p\geq 3$ such that $\varepsilon\geq 2/(p+1)$. Then for each $k\geq (p+1)(p-2)$, we have that:
\begin{align*}
    \frac{1}{k}\sum_{z=1}^{k-1}\left(1-\frac{z}{k}\right)^{z+1}&=\frac{1}{k}\sum_{z=1}^{p-2}\left(1-\frac{z}{k}\right)^{z+1}+\frac{1}{k}\sum_{z=p-1}^{k-1}\left(1-\frac{z}{k}\right)^{z+1}\\
    &\leq\frac{p-2}{k}+\frac{1}{k}\sum_{z=p-1}^{k-1}\left(1-\frac{z}{k}\right)^p\\
    &\leq\frac{1}{p+1}+\frac{1}{k}\sum_{z=1}^{k-1}\left(1-\frac{z}{k}\right)^p\\
    &=\frac{1}{p+1}+\frac{1}{k}\sum_{z=1}^{k-1}\left(\frac{z}{k}\right)^p\\
    &\leq\frac{1}{p+1}+\int_0^1t^pdt=\frac{2}{p+1}\leq \varepsilon.
\end{align*}
\end{proof}

\GAtiebreaking*
\begin{proof}
Fix $n\geq 3$. The number of winners under any voting rule with generalised antiplurality tie-breaking is no higher than one plus the number of tied pairs of candidates under generalised antiplurality. Hence, the expected number of winners under any voting rule with generalised antiplurality tie-breaking is no higher than one plus the expected number of tied pairs of candidates under generalised antiplurality, which tends to one plus zero by \autoref{lemma:GA-large-m}.

For $n=2$, the argument is identical.
\end{proof}

\HHAGAposresponse*
\begin{proof}
For a profile $P$, let $B$ be the set of winners under veto by consumption and $A\subseteq B$ the set of winners under veto by consumption with generalised antiplurality tie-breaking. Let a profile $P'$ be obtained from $P$ by having one voter raise a candidate $c\in A$ one position in his voting order. Because generalised antiplurality ranks $c$ no lower than any other candidate from $B$ in $P$, it ranks $c$ higher than any other candidate from $B$ in $P'$. Denote $B'$ the set of winners under veto by consumption in~$P'$. By \autoref{prop:fractionalveto} and the proof in \autoref{prop:AVmonotonic}, $c\in B'$ and $B'\subseteq B$. Hence, generalised antiplurality ranks $c$ higher than any other candidate from $B'$ in $P'$ and $c$ is the unique winner under veto by consumption with generalised antiplurality tie-breaking.  
\end{proof}

\section{Proof of \autoref{lem:tokenscore}}\label{app:tokens}

\autoref{lem:tokenscore} is morally similar to theorem 4-bis of \citet[page 151]{Moulin1983book}, and is proved using similar techniques.

We start with a technical lemma. \citet{Moulin1983book} contains a sketch of the proof, we reconstruct the full proof below.
\begin{lemma}[{\citealt[lemma 4, page 145]{Moulin1983book}}]\label{lem:marriage}
Let $X$ and $N$ be two sets, $|X|\geq |N|$. Suppose there exist sets $X_i\subseteq X$ for $i\in N$ with the following property:
\begin{equation}\label{eqn:Tlessunion}
    \forall T\subseteq N: |T|\leq \left|\bigcup_{i\in T} X_i\right|.
\end{equation}
There exists an $i'$ such that:
\begin{equation}\label{eqn:marriagelemma}
    \forall T\subseteq N\setminus\set{i'}: \left|\bigcup_{i\in T} X_i\right|=|T| \Rightarrow X_{i'}\cap\left(\bigcup_{i\in T} X_i\right)=\emptyset.
\end{equation}
\end{lemma}
\begin{proof}
Let $X(T)=\bigcup_{i\in T} X_i$, $\CT=\set{T : |T|=|X(T)|}$. Observe that \ref{eqn:marriagelemma} can be restated as:
\begin{equation}\label{eqn:marriagelemmamk2}
     \forall T\subseteq N\setminus\set{i'}: T\in\CT \Rightarrow X_{i'}\cap X(T)=\emptyset.
\end{equation}

Consider $T_1,T_2\in\CT$. We first establish two facts about the structure of $\CT$ (\ref{eqn:intersectionnotempty}, \ref{eqn:intersectionempty}):

\begin{equation}\label{eqn:intersectionnotempty}
    T_1\cap T_2\neq\emptyset\Rightarrow T_1\cap T_2\in\CT.
\end{equation}
Let $x=|T_1|=|X(T_1)|$, $y=|T_2|=|X(T_2)|$, $z=|T_1\cap T_2|$, $w=|X(T_1\cap T_2)|$. By \ref{eqn:Tlessunion}, $w\geq z$. For contradiction, suppose $w>z$. Let us consider the size of $X(T_1\cup T_2)$:
\begin{align*}
    |X(T_1\cup T_2)| &= \left|\bigcup_{i\in T_1\cup T_2} X_i\right|,\\
    &= \left|\bigcup_{i\in T_1} X_i \cup\bigcup_{i\in T_2}X_i \right|,\\
    &=|X(T_1)\cup X(T_2)|,\\
    &=|X(T_1)|+|X(T_2)|-|T_1\cup T_2|,\\
    &=x+y-w.
\end{align*}
However, observe that $|T_1\cup T_2|=x+y-z$, and by \ref{eqn:Tlessunion} we require that $x+y-z=|T_1\cup T_2|\leq |X(T_1\cup T_2)| =x+y-w$. Since $w>z$, this is impossible.

\begin{equation}\label{eqn:intersectionempty}
T_1\cap T_2=\emptyset\Rightarrow X(T_1)\cap X(T_2)=\emptyset.    
\end{equation}
If this were not the case, then $|T_1\cup T_2|>|X(T_1)\cup X(T_2)| =|X(T_1\cup T_2)|$, which contradicts \ref{eqn:Tlessunion}.

We will now prove the lemma by finding an $i'$ satisfying \ref{eqn:marriagelemmamk2}. Choose a $T\in\CT$ minimal with respect to inclusion. By \ref{eqn:intersectionnotempty}, every other $R\in\CT$ must either be a superset of $T$ or have an empty intersection with $T$. Choose any $i'\in T$. If $R$ is a superset of $T$, then $R\nsubseteq T\setminus\set{i'}$, so \ref{eqn:marriagelemmamk2} is satisfied vacuously. If $R\cap T=\emptyset$, then by \ref{eqn:intersectionempty}, $X(T)\cap X(R)=\emptyset$. Since $X_{i'}\subseteq X(T)$, \ref{eqn:marriagelemmamk2} is satisfied.
\end{proof}

\begin{definition}\label{def:trcore}
Let $\boldsymbol{\lambda}=(\lambda_1,\dots,\lambda_m)$ and $\boldsymbol{\mu}=(\mu_1,\dots,\mu_n)$ be vectors of non-negative integers, with the intuitive meaning that $\mu_i$ is the number of veto tokens voters $i$ can cast and $\lambda_i$ is how many clones candidate $c_i$ has. Let $\boldsymbol{\mu}(T)=\sum_{i\in T} \mu_i$, $\boldsymbol{\lambda}(B)=\sum_{c_i\in B} \lambda_i$.

Given an order $\sigma$ over all $\boldsymbol{\mu}(\CV)$ tokens, voting by $(\boldsymbol{\lambda},\boldsymbol{\mu})$-tokens proceeds by letting every voter cast a veto token in the order specified by $\sigma$. The winners are all candidates with a clone remaining at the end. 

Let $\text{Better}(T,P,c)$ be the set of all candidates that every voter in $T$ prefers to $c$. The $(\boldsymbol{\lambda},\boldsymbol{\mu})$-core of profile $P$ is all candidates $c$ for which:

$$\forall T\subseteq\CV: \boldsymbol{\lambda}(\text{Better}(T,P,c))+\boldsymbol{\mu}(T)<\boldsymbol{\lambda}(\CC).$$
\end{definition}

The difference between voting by $(\boldsymbol{\lambda},\boldsymbol{\mu})$-tokens and voting by veto tokens as per \autoref{def:vetotokens},  is that in voting by veto tokens every voter has the same $r$ number of veto tokens and every candidate the same $t$ clones, $r$ and $t$ chosen to satisfy \autoref{lem:moulin}, and thus after the vetoes are cast there are exactly $\gcd(m,n)$ clones of candidates remaining; voting by $(\boldsymbol{\lambda},\boldsymbol{\mu})$-tokens allows an arbitrary distribution of veto power, and we make no restriction as to how many clones are left after all voters have cast their veto tokens.

The following is very similar to theorem 4 in \citet[page 143]{Moulin1983book}:

\begin{lemma}\label{lem:11tokens}
Candidate $c$ is in the $(\boldsymbol{1},\boldsymbol{1})$-core of $P$ if and only if there exists an order of $(\boldsymbol{1},\boldsymbol{1})$-tokens that elects $c$.
\end{lemma}
\begin{proof}
Suppose $c$ is not in the $(\boldsymbol{1},\boldsymbol{1})$-core. That means there exists a $T\subseteq\CV$ such that:
$$\boldsymbol{1}(\text{Better}(T,P,c))+\boldsymbol{1}(T)\geq\boldsymbol{1}(\CC),$$
or equivalently:
$$|T|\geq|\CC\setminus\text{Better}(T,P,c)|.$$
A voter of $T$ will veto a candidate in $\CC\setminus\text{Better}(T,P,c)$ as long as $c$ remains. From the inequality above, it thus follows that by the time all voters in $T$ have cast their vetoes $c$ must be eliminated.

Suppose $c$ is in the $(\boldsymbol{1},\boldsymbol{1})$-core. Observe that if the $(\boldsymbol{1},\boldsymbol{1})$-core is non-empty, it must be that $|\CC|>|\CV|$.

Let $\text{Worse}(T,P,c)=\set{a\in\CC\mid \exists i\in T: a\prec_i c}$ be the set of candidates $a$ such that at least one voter in $T$ prefers $c$ to $a$. Observe that $\CC=\text{Better}(T,P,c)\uplus\set{c}\uplus\text{Worse}(T,P,c)$.

Choose a $c$ in the $(\boldsymbol{1},\boldsymbol{1})$-core. From \autoref{def:trcore}:
\begin{align}
    \forall T\subseteq \CV:\quad&|\text{Better}(T,P,c)|+|T|<|\CC|,\nonumber\\
    \forall T\subseteq \CV:\quad& |T|\leq |\text{Worse}(T,P,c)|.\label{eqn:Tworse}
\end{align}
That is, no coalition has enough members to eliminate every candidate worse than $c$ and $c$ itself.

Let $W_i=\text{Worse}(\set{i},P,c)$ be the set of candidates $i$ considers to be worse than $c$. Using \ref{eqn:Tworse} and the fact that $\bigcup_{i\in T} W_i=\text{Worse}(T,P,c)$, we have that:
$$\forall T\subseteq\CV: |T|\leq \left|\bigcup_{i\in T}W_i\right|.$$
Since $|\CV|\leq|\CC\setminus\set{c}|$, all the conditions of \autoref{lem:marriage} are satisfied and there exists a voter $i'$ such that:
\begin{equation}\label{eqn:mlemma}
    \forall T\subseteq \CV\setminus\set{i'}: \left|\bigcup_{i\in T} W_i\right|=|T| \Rightarrow W_{i'}\cap\left(\bigcup_{i\in T} W_i\right)=\emptyset.
\end{equation}
Applying \ref{eqn:Tworse} to the singleton $T=\set{i'}$, it follows that $W_{i'}$ is non-empty. Let $w$ be $i'$'s worst choice. Clearly, $w\in W_{i'}$. We will show that in profile $P'$ obtained by removing $i'$ and $w$, $c$ is still in the $(\boldsymbol{1},\boldsymbol{1})$-core. This will allow us to inductively construct an order that elects $c$: at each stage, choose an $i'$ via the lemma, and have that candidate veto his last choice, who is not $c$.

Suppose otherwise: that there exist a coalition $T\subseteq\CV\setminus\set{i'}$ for whom:
\begin{align}
    |T|&> |\text{Worse}(T,P',c)|,\nonumber\\
    |T|&> |\CC'| - |\text{Better}(T,P',c)|-1,\nonumber\\
    |T|&\geq |\CC| - |\text{Better}(T,P',c)| -1.\label{eqn:TCb1}
\end{align}

Observe $\text{Better}(T,P',c)\cup\set{c}$ and $\text{Worse}(T,P,c)$ are disjoint. It follows that $|\text{Worse}(T,P,c)|+|\text{Better}(T,P',c)|+1\leq |\CC|$. We combine this with \ref{eqn:TCb1} to obtain:
$$|T|\geq  |\CC| - |\text{Better}(T,P',c)| -1\geq|\text{Worse}(T,P,c)|.$$
Since $c$ is in the $(\boldsymbol{1},\boldsymbol{1})$-core of $P$, it also follows that $|\text{Worse}(T,P,c)|\geq |T|$, so the two cardinalities are equal. We can thus invoke \ref{eqn:mlemma} to show that $W_{i'}\cap\left(\bigcup_{i\in T} W_i\right)=\emptyset$.

Since $w\in W_{i'}$, and $w\neq c$, it follows that $w\in\text{Better}(T,P,c)$. I.e., all voters in $T$ prefer $w$ to $c$. However, $w\notin \text{Better}(T,P',c)$, because $w$ is absent in $P'$. Thus $|\text{Better}(T,P,c)|\geq |\text{Better}(T,P',c)|+1\geq |\CC|-|T|$. This is a contradiction, because $\text{Better}(T,P,c),\text{Worse}(T,P,c),\set{c}$ are supposed to partition the candidates, but now the sets are of sizes at least $|\CC|-|T|,|T|,1$ respectively. 
\end{proof}

The following is very similar to theorem 4-bis in \citet[page 151]{Moulin1983book}:

\begin{corollary}\label{cor:trcore11core}
Let $P_{\boldsymbol{\lambda},\boldsymbol{\mu}}$ be the profile obtained from $P$ by replacing every candidate $c_i$ with $\lambda_i$ clones of $c_i$, and every voter $i$ with $\mu_i$ clones of $i$. The following is true:
\begin{enumerate}
\item A candidate $c$ can be elected by $(\boldsymbol{\lambda},\boldsymbol{\mu})$-tokens in $P$ if and only if a clone of $c$ can be elected by $(\boldsymbol{1},\boldsymbol{1})$-tokens in $P_{\boldsymbol{\lambda},\boldsymbol{\mu}}$.
    \item A candidate $c$ is in the $(\boldsymbol{\lambda},\boldsymbol{\mu})$-core of $P$ if and only if a clone of $c$ is in the $(\boldsymbol{1},\boldsymbol{1})$-core of $P_{\boldsymbol{\lambda},\boldsymbol{\mu}}$.
    \item A candidate $c$ is in the $(\boldsymbol{\lambda},\boldsymbol{\mu})$-core of $P$ if and only if $c$ can be elected by $(\boldsymbol{\lambda},\boldsymbol{\mu})$-tokens in $P$.
\end{enumerate}
\end{corollary}
\begin{proof}
1 is obvious.

For 2, suppose $c$ is not in the $(\boldsymbol{\lambda},\boldsymbol{\mu})$-core of $P$. It follows there exists a $T$ for which:
$$\boldsymbol{\mu}(T)\geq \boldsymbol{\lambda}(\CC)- \boldsymbol{\lambda}(\text{Better}(T,P,c)).$$
Consider the coalition $T'$ consisting of the clones of $T$ in $P_{\boldsymbol{\lambda},\boldsymbol{\mu}}$. This coalition has $\boldsymbol{\mu}(T)$ members. $\text{Better}(T',P_{\boldsymbol{\lambda},\boldsymbol{\mu}},c')$ contains at least $\boldsymbol{\lambda}(\text{Better}(T,P,c))$ candidates for every clone clone $c'$ of $c$, and there are $\boldsymbol{\lambda}(\CC)$ candidates in $P_{\boldsymbol{\lambda},\boldsymbol{\mu}}$ in total. Thus:
$$|T'|=\boldsymbol{\mu}(T)\geq \boldsymbol{\lambda}(\CC)- \boldsymbol{\lambda}(\text{Better}(T,P,c)) = |\CC'|-\text{Better}(T',P_{\boldsymbol{\lambda},\boldsymbol{\mu}},c').$$
Any clone $c'$ of $c$ is not in the $(\boldsymbol{1},\boldsymbol{1})$-core of $P_{\boldsymbol{\lambda},\boldsymbol{\mu}}$.

Suppose every clone of $c$ is not in the $(\boldsymbol{1},\boldsymbol{1})$-core of $P_{\boldsymbol{\lambda},\boldsymbol{\mu}}$. Without loss of generality, suppose all the voters rank the clones $c_1\succ\dots\succ c_{\lambda_c}$, and consider a coalition $T'$ for which. 
$$|T'|\geq |\CC'|- |\text{Better}(T',P_{\boldsymbol{\lambda},\boldsymbol{\mu}},c_1)|.$$
Of all such $T'$, pick one that is clone-maximal, i.e.\ if it includes a clone of $i$, it includes all clones of $i$. Observe that $|T'|=\sum_{i\in T} \mu_i i=\boldsymbol{\mu}(T)$ for some $T$ in $P$. Since $c_1$ is better than all the clones of $c$, $|\text{Better}(T',P_{\boldsymbol{\lambda},\boldsymbol{\mu}},c_1)|=\boldsymbol{\lambda}(\text{Better}(T,P,c))$. It follows:

$$\boldsymbol{\mu}(T)=|T'|\geq |\CC'|-|\text{Better}(T',P',c_1)|=\boldsymbol{\lambda}(\CC)- \boldsymbol{\lambda}(\text{Better}(T,P,c)).$$
Thus $c$ is not in the $(\boldsymbol{\lambda},\boldsymbol{\mu})$-core of $P$.

For 3, combine 1 and 2 with \autoref{lem:11tokens}.
\end{proof}

Now all we need to do is to link \autoref{def:trcore} back to the proportional veto core of \autoref{def:vetocore}.

\begin{lemma}\label{lem:coretrcore}
Let $r,t$ be as in \autoref{lem:moulin}. A candidate $c$ is in the $(\boldsymbol{t},\boldsymbol{r})$-core of $P$ if and only if $c$ is in the proportional veto core.
\end{lemma}
\begin{proof}
Suppose $c$ is not in the proportional veto core. That is, there exists a coalition $T$ for whom $m-|\text{Better}(T,P,c))|\leq v(T)$. It follows:
\begin{align*}
 m-|\text{Better}(T,P,c)|&\leq \ceil{m\frac{|T|}{n}}-1,\\ 
 m-|\text{Better}(T,P,c)|&\leq \floor{\frac{r|T|}{t}},\\
 tm-t|\text{Better}(T,P,c)|&\leq t\floor{\frac{r|T|}{t}}\leq r|T|,\\
 tm&\leq r|T|+t|\text{Better}(T,P,c)|.
\end{align*}
So $c$ is not in the $(\boldsymbol{t},\boldsymbol{r})$-core either.

Suppose $c$ is in the proportional veto core. For every coalition $T$:
\begin{align}
     m-|\text{Better}(T,P,c)|&> \ceil{m\frac{|T|}{n}}-1,\nonumber\\ 
 m-|\text{Better}(T,P,c)|&> \floor{\frac{r|T|}{t}},\nonumber\\
 tm-t|\text{Better}(T,P,c)|&> r|T|-t\epsilon,\quad \epsilon\in\set{1/t,\dots,t-1/t},\nonumber\\
 tm+x&> r|T|+t|\text{Better}(T,P,c)|,\quad x\geq 1.\label{eqn:tmplusx}
\end{align}
Now suppose, for contradiction, that $c$ is nevertheless not in the $(\boldsymbol{t},\boldsymbol{r})$-core, and there exists a $T'$ for whom:
$$tm\leq r|T'|+t|\text{Better}(T',P,c)|.$$
Combining this with \ref{eqn:tmplusx}, it follows that in fact:
$$tm= r|T'|+t|\text{Better}(T',P,c)|.$$
We deduce:
\begin{align}
    tm-t|\text{Better}(T',P,c)|=r|T'|,\label{eqn:tmminustequal}\\
    \frac{r|T'|}{t}\text{ is an integer (divide both sides by $t$)}.\nonumber
\end{align}
However, we have assumed that:
\begin{align*}
     m-|\text{Better}(T',P,c)|&> \ceil{m\frac{|T'|}{n}}-1,\\ 
 m-|\text{Better}(T,P,c)|&> \floor{\frac{r|T'|}{t}}=\frac{r|T'|}{t}.
\end{align*}
Which contradicts \ref{eqn:tmminustequal}.
\end{proof}

This gives us our result.

\tokenscore*
\begin{proof}
Suppose $c$ is in the proportional veto core of $P$. By \autoref{lem:coretrcore}, $c$ s in the $(\boldsymbol{t},\boldsymbol{r})$-core of $P$. By \autoref{cor:trcore11core}, there exists an order of $(\boldsymbol{t},\boldsymbol{r})$-tokens that elects $c$ in $P$. $(\boldsymbol{t},\boldsymbol{r})$-tokens and voting by veto tokens are the same rule if we choose the functions $r(m,n),t(m,n)$ to return the values $r,t$, and the function $L(r(m,n),n)$ to return the same order as used by $(\boldsymbol{t},\boldsymbol{r})$-tokens.
\end{proof}

\section{Proof of \autoref{prop:HHAmonotonic}}\label{app:monotonic}

\begin{lemma}\label{lem:eatingtimes}
During the run of veto by consumption, a candidate can only be eaten at a time of the form $\frac{z}{n!^m}$, where $z$ is an integer.
\end{lemma}
\begin{proof}
We will show that the capacity of each candidate~$j\in\{1,\ldots,m\}$ at the start of each round~$k\in\{0,\ldots,K\}$ is of the form $\frac{x^k(j)}{n!^k}$ for some integer~$x^k(j)$. We proceed by induction. In the base case, the capacity of each candidate is $1/n!^0$ at the start of round~$0$. 

Suppose the inductive hypothesis holds for some round~$k$. That is, the capacity of each candidate~$j$ at the start of this round is $c(j)=\frac{x(j)}{n!^k}$ for some integer~$x(j)$, and $n(j)$ is the number of voters eating~$j$. If some candidate $i$ gets eaten, then the round will last for $r=\frac{c(i)}{n(i)}=\frac{x(i)}{n!^{k}n(i)}$ units of time. At the end of the round, the capacity of each candidate~$j$ is: \begin{equation*}
c(j)-r n(j)=\frac{x(j)}{n!^k}-\frac{x(i) n(j)}{n!^{k}n(i)}=\frac{x(j) n!-x(i) n(j) n!/n(i)}{n!^{k+1}},
\end{equation*}
where the numerator is an integer because $n(i)$ is a divisor of~$n!$.

The lemma will follow since, if a round~$k$ begins at time $\frac{z}{n!^m}$ for some integer~$z$, and some candidate $i$ gets eaten, then the round will last for $r=\frac{c(i)}{n(i)}=\frac{x(i)}{n!^{k}n(i)}$ units of time for some integer~$x(i)$, where $n(i)$ is the number of voters eating~$i$. It follows that the round ends at time
\begin{equation*}
    \frac{z}{n!^m}+\frac{x(i)}{n!^{k}n(i)}=\frac{s}{n!^m}
\end{equation*}
for some integer $s$ because $n(i)$ is a divisor of~$n!$.

\end{proof}

\begin{definition}
Create $n!^m$ clones of every candidate. Repeating an order over the $n$ voters, 
the voters successively veto a clone of their least preferred candidate until $n$ clones are left. \emph{Fractional veto} selects all candidates who have at least one clone remaining.
\end{definition}

\begin{proposition}\label{prop:fractionalveto}
For each order over the voters, fractional veto selects the same candidates as veto by consumption does.
\end{proposition}
\begin{proof}
We will show by induction that for every integer $z$ and candidate~$i$, this candidate has $x^z(i)$ clones left after $z$ rounds under fractional veto if and only if he has $\frac{x^z(i)}{n!^m}$ capacity left at time $\frac{z}{n!^m}$ under veto by consumption. In the base case, each candidate has $n!^m$ clones left after $0$ rounds under fractional veto and $\frac{n!^m}{n!^m}$ capacity left at time $\frac{0}{n!^m}$ under veto by consumption. 

Suppose the inductive hypothesis holds for some integer~$z$. That is, every candidate $i$ has $x^z(i)$ clones left after $z$ rounds under fractional veto, and $c(i)=\frac{x^z(i)}{n!^m}$ capacity left at time $\frac{z}{n!^m}$ under veto by consumption. Every candidate with $0$ clones left after $z$ rounds under fractional veto has also $0$ capacity left at time $\frac{z}{n!^m}$ under veto by consumption and is fully eliminated in both procedures, and hence will also have $0$ clones left after $z+1$ rounds under fractional veto and $0$ capacity left at time $\frac{z+1}{n!^m}$ under veto by consumption. 

Denote $X=\{i:x(i)>0\}$ the set of candidates who have at least one clone remaining after $z$ rounds under fractional veto; by the inductive hypothesis, this is also the of candidates with positive capacity at time $\frac{z}{n!^m}$ under veto by consumption. For each candidate $i\in X$, let $n(i)$ be the number of voters
who consider $i$ as their least preferred candidate in $X$, which means that they are eating $i$ under veto by consumption. By \autoref{lem:eatingtimes}, no candidate gets eaten between time $\frac{z}{n!^m}$ and time $\frac{z+1}{n!^m}$, so the same $n(i)$ voters eat $i$ over this time period, and $i$'s capacity reduces from $\frac{x(i)}{n!^m}$ to $\frac{x(i)-n(i)}{n!^m}\geq 0$. The latter inequality implies that for each candidate $i\in X$, the number of clones $x(i)$ is not less than the number of $n(i)$ voters willing to cast a veto against $i$ at the beginning of the $(z+1)$th round. As a result, for each order over the $n$ voters, exactly $n(i)$ voters cast a veto against $i$, leaving him with $x(i)-n(i)$ clones after $z+1$ rounds.
\end{proof}

\begin{definition}
Create $S(j)$ clones of every candidate~$j$ and endow every voter $i$ with $T(i)$ veto tokens. Using an order over the veto tokens $L=(v_1,\dots,v_Z)$, the voters successively veto a clone of their least preferred candidate. \emph{Alternating veto} selects all candidates who have at least one clone remaining.
\end{definition}

Fractional veto is an alternating veto, where $S(j)=n!^m$ for every candidate~$j$, $T(i)=\frac{mn!^m}{n}-1$ for every voter $i$, and $L$ consists of repeating an order over the $n$ voters $\frac{mn!^m}{n}-1$ times.

Voting by veto tokens is an alternating veto, where $rn=tm-\gcd(m,n)$, $t>\gcd(m,n)n$, $S(j)=t$ for every candidate~$j$, $T(i)=r$ for every voter $i$, and $L$ is an order over the $rn$ veto tokens of the voters.

\begin{proposition}\label{prop:AVmonotonic}
Alternating veto is monotonic.
\end{proposition}
\begin{proof}
Let $v_1,\ldots,v_Z$ be a voting order and $S(i)$ be an initial number of clones for each candidate~$i$. Call the original profile $P$. Let 1 be the winner under $P$, and let voter~V rank 1 directly below 2. We shall show that 1 remains the winner if $V$ switches 1 and 2.

Without loss of generality, extend the preferences in $P$ to preferences over clones of candidates by assuming that every voter ranks $i_1\succ\ldots\succ i_{S(i)}$ the clones of each candidate~$i$. call the resulting profile $P_0$. Let $S=S(1)$ and construct a sequence of profiles $P_0,\dots,P_S$, where $P_s$ is obtained from $P_0$ by shifting clones $1_1,\dots,1_s$ above the clones of 2 in voter~V's preference order, keeping other voters' preferences unchanged. This the preferences of voter~V look as follows:
\begin{align*}
    \ldots&\succ 2_1\succ\ldots\succ 2_{S(2)}\succ 1_1\succ\ldots\succ 1_{S}\succ\ldots&\text{for}\quad&P_0,\\
    \ldots\succ 1_1\succ\ldots\succ 1_s&\succ 2_1\succ\ldots\succ 2_{S(2)}\succ 1_{s+1}\succ\ldots\succ 1_{S}\succ\ldots&\text{for}\quad&P_s,\\
    \ldots\succ 1_1\succ\ldots\succ 1_{s+1}&\succ 2_1\succ\ldots\succ 2_{S(2)}\succ 1_{s+2}\succ\ldots\succ 1_{S}\succ\ldots&\text{for}\quad&P_{s+1},\\
    \ldots\succ 1_1\succ\ldots\succ 1_{S}&\succ 2_1\succ\ldots\succ 2_{S(2)}\succ\ldots&\text{for}\quad&P_{S}.
\end{align*}
Note that only $P_0$ and $P_S$ are true profiles in the sense that they are extensions of voters' preferences over candidates to preferences over clones, but the alterating veto procedure is well defined for all of them.

For each $z\in\{0,\ldots,Z\}$, $s\in\{0,\ldots,S\}$ and $i\in\{1,\ldots,m\}$, let $x^{z,s}(i)$ denote the number of clones of candidate~$i$ left after casting the first $z$ vetoes under profile~$P_s$. A single veto token $v_{z+1}$ vetoes exactly one clone and thus:
\begin{equation}\label{xzs_inequality}
    S(i)=x^{0,s}(i)\geq x^{1,s}(i)\geq\ldots\geq x^{Z,s}(i) \quad\text{for each}\quad s,i,
\end{equation}
\begin{equation}\label{deleting_clone}
    x^{z+1,s}(i)=\begin{cases} 
      x^{z,s}(i)-1& \text{if $v_{z+1}$ vetoes a clone of $i$}, \\
      x^{z,s}(i)& \text{otherwise}. 
   \end{cases}
\end{equation}

Since candidate 1 wins under profile~$P_0$ it follows that:
\begin{equation}\label{candidate1wins}
   x^{Z,0}(1)>0. 
\end{equation}
We want to show that $x^{Z,S}(1)\geq x^{Z,0}(1)$ and $x^{Z,S}(i)\leq x^{Z,0}(i)$ for each other candidate~$i$. This will establish that 1 remains the winner under $P_S$, and hence alternating veto is monotonic.

It will follow from the following claim:

\medskip
\noindent\textbf{Claim}. For each pair of $z\in\{0,\ldots,Z\}$ and $s\in\{0,\ldots,S-1\}$, we have that
\begin{equation}\label{claim_condition1}
    x^{Z,s}(1)>0,
\end{equation}
and one of the next two conditions is satisfied:
\begin{equation}\label{claim_condition2}
    x^{z,s+1}(i)=x^{z,s}(i) \quad \text{for each}\quad i,
\end{equation}
or
\begin{align}\label{claim_condition3}
   x^{z,s+1}(1)&=x^{z,s}(1)+1,&&\\
   \label{claim_condition4}
   x^{z,s+1}(j)&=x^{z,s}(j)-1&\text{for some}\quad&j\neq 1,\\
   \label{claim_condition5}
   x^{z,s+1}(i)&=x^{z,s}(i)&\text{for each}\quad&i\notin\{1,j\}.
\end{align}

\medskip
We prove the claim by induction. In the base case ($z=0$, $s=0$), (\ref{claim_condition1}) follows from~(\ref{candidate1wins}), and (\ref{claim_condition2}) is true because no vetoes have been cast at the beginning of the procedure.

Suppose the claim is true for some pair $(z,s)$ with $z<Z$. We will prove that the claim is true for $(z+1,s)$. Condition~(\ref{claim_condition1}) depends only on $s$, so it is true for $(z+1,s)$ if it is true for $(z,s)$.

By~(\ref{xzs_inequality}) and~(\ref{claim_condition1}), $x^{z,s}(1)>0$. By~(\ref{claim_condition2}) and (\ref{claim_condition3}), $x^{z,s+1}(1)>0$. We proceed by cases based on who casts the next veto.

\medskip
\setlength{\leftskip}{0 em}\noindent\textbf{Case 1}: $v_{z+1}$ is \textit{not} Voter~V and hence has \textit{identical} preferences in $P_s$ and $P_{s+1}$.

\medskip
\setlength{\leftskip}{1 em}\noindent\textbf{Case 1.1}: $x^{z,s+1}=x^{z,s}$. Obviously, $v_{z+1}$ vetoes the same clone in both situations. Hence, $x^{z+1,s+1}=x^{z+1,s}$ and (\ref{claim_condition2}) is true for $(z+1,s)$.

\medskip
\setlength{\leftskip}{1 em}\noindent\textbf{Case 1.2}: (\ref{claim_condition3})-(\ref{claim_condition5}) are true for $(z,s)$. By~(\ref{claim_condition4}), $x^{z,s}(j)>0$. 

\medskip
\setlength{\leftskip}{2 em}\noindent\textbf{Case 1.2.1}: $v_{z+1}$ vetoes a clone of $j$ in $x^{z,s}$. By~(\ref{deleting_clone}), $x^{z+1,s}(1)=x^{z,s}(1)$, $x^{z+1,s}(j)=x^{z,s}(j)-1$, and $x^{z+1,s}(i)=x^{z,s}(i)$ for each $i\notin\{1,j\}$. Thus, if $v_{z+1}$ vetoes a clone of~$1$ in $x^{z,s+1}$, then $x^{z+1,s+1}=x^{z+1,s}$ and (\ref{claim_condition2}) is true for $(z+1,s)$; otherwise, (\ref{claim_condition3})-(\ref{claim_condition5}) are true for $(z+1,s)$, but maybe for a new~$j$.

\medskip
\setlength{\leftskip}{2 em}\noindent\textbf{Case 1.2.2}: $v_{z+1}$ vetoes a clone of some $l\neq j$ in $x^{z,s}$.  By~(\ref{claim_condition5}), for each $i\neq j$, we have that $x^{z,s+1}(i)>0$ if and only if $x^{z,s}(i)>0$. Thus, $v_{z+1}$ vetoes a clone of~$l$ in $x^{z,s+1}$, and (\ref{claim_condition3})-(\ref{claim_condition5}) are true for $(z+1,s)$.

\medskip
\setlength{\leftskip}{0 em}\noindent\textbf{Case 2}: $v_{z+1}$ is Voter~V and hence has \textit{different} preferences in $P_s$ and $P_{s+1}$. Denote~$A$ the set of all candidates this voter prefers less than candidates 1 and~2.

\medskip
\setlength{\leftskip}{1 em}\noindent\textbf{Case 2.1}: $x^{z,s+1}=x^{z,s}$. 

\medskip
\setlength{\leftskip}{2 em}\noindent\textbf{Case 2.1.1}: $v_{z+1}$ vetoes a clone of $1$ in $x^{z,s}$. By~(\ref{deleting_clone}), $x^{z+1,s}(1)=x^{z,s}(1)-1$, and $x^{z+1,s}(i)=x^{z,s}(i)$ for each $i\neq 1$. Thus, if $v_{z+1}$ vetoes a clone of~$1$ in $x^{z,s+1}$, then $x^{z+1,s+1}=x^{z+1,s}$ and (\ref{claim_condition2}) is true for $(z+1,s)$; otherwise, (\ref{claim_condition3})-(\ref{claim_condition5}) are true for $(z+1,s)$.

\medskip
\setlength{\leftskip}{2 em}\noindent\textbf{Case 2.1.2}: $v_{z+1}$ vetoes a clone of~$2$ in $x^{z,s}$. Hence, $x^{z,s+1}(1)=x^{z,s}(1)\leq s$, $x^{z,s+1}(2)=x^{z,s}(2)>0$ and $x^{z,s+1}(i)=x^{z,s}(i)=0$ for each $i\in A$. Thus, $v_{z+1}$ vetoes a clone of~$2$ in $x^{z,s+1}$, $x^{z+1,s+1}=x^{z+1,s}$ and (\ref{claim_condition2}) is true for $(z+1,s)$.

\medskip
\setlength{\leftskip}{2 em}\noindent\textbf{Case 2.1.3}: $v_{z+1}$ vetoes a clone of some $l\in A$ in $x^{z,s}$. Hence, $x^{z,s+1}(l)=x^{z,s}(l)>0$ and $x^{z,s+1}(i)=x^{z,s}(i)=0$ for each $i$ such that $v_{z+1}$ prefers $l$ to $i$. Thus, $v_{z+1}$ vetoes a clone of~$l$ in $x^{z,s+1}$, $x^{z+1,s+1}=x^{z+1,s}$ and (\ref{claim_condition2}) is true for $(z+1,s)$.

\medskip
\setlength{\leftskip}{1 em}\noindent\textbf{Case 2.2}: (\ref{claim_condition3})-(\ref{claim_condition5}) are true for $(z,s)$. By~(\ref{claim_condition4}), $x^{z,s}(j)>0$. 

\medskip
\setlength{\leftskip}{2 em}\noindent\textbf{Case 2.2.1}: $v_{z+1}$ vetoes a clone of $j$ in $x^{z,s}$. See the argument in Case~1.2.1.

\medskip
\setlength{\leftskip}{2 em}\noindent\textbf{Case 2.2.2}: $v_{z+1}$ vetoes $1_k$ in $x^{z,s}$. If $k\leq s$, then $x^{z,s}(i)=0$ for each $i\in A\cup 2$, and by~(\ref{claim_condition4}) and (\ref{claim_condition5}),
$x^{z,s+1}(i)=0$ for each $i\in A\cup 2$. If $k\geq s+1$, then $x^{z,s}(1)\geq s+1$, $x^{z,s}(i)=0$ for each $i\in A$, and by~(\ref{claim_condition3}), $x^{z,s+1}(1)\geq s+2$, and  by~(\ref{claim_condition4}) and (\ref{claim_condition5}), $x^{z,s+1}(i)=0$ for each $i\in A$. Thus, $v_{z+1}$ vetoes a clone of~$1$ in $x^{z,s+1}$, and (\ref{claim_condition3})-(\ref{claim_condition5}) are true for $(z+1,s)$.

\medskip
\setlength{\leftskip}{2 em}\noindent\textbf{Case 2.2.3}: $v_{z+1}$ vetoes a clone of some $l\notin\{1,j\}$ in $x^{z,s}$. By~(\ref{claim_condition5}), $x^{z,s+1}(l)=x^{z,s}(l)>0$. If $l=2$, then $x^{z,s}(1)\leq s$ and $x^{z,s}(i)=0$ for each $i\in A$. By~(\ref{claim_condition3}), $x^{z,s+1}(1)\leq s+1$, and  by~(\ref{claim_condition4}) and (\ref{claim_condition5}), $x^{z,s+1}(i)=0$ for each $i\in A$. If $l\in A$, then $x^{z,s}(i)=0$ for each $i\in A$ such that $v_{z+1}$ prefers $l$ to $i$. By~(\ref{claim_condition4}) and (\ref{claim_condition5}), $x^{z,s+1}(i)=0$ for each $i\in A$ such that $v_{z+1}$ prefers $l$ to $i$. Thus, $v_{z+1}$ vetoes a clone of~$l$ in $x^{z,s+1}$, and (\ref{claim_condition3})-(\ref{claim_condition5}) are true for $(z+1,s)$.

\medskip
\setlength{\leftskip}{0 em} Suppose the claim is true for some pair $(Z,s)$ with $s<S-1$. We will prove that the claim is true for $(0,s+1)$. By (\ref{claim_condition1})-(\ref{claim_condition3}) for $(Z,s)$, $x^{Z,s+1}(1)\geq x^{Z,s}(1)>0$, so (\ref{claim_condition1}) is true for $(0,s+1)$. By~(\ref{xzs_inequality}), $x^{0,s+2}=x^{0,s+1}$ and (\ref{claim_condition2}) is true for $(0,s+1)$. 

\end{proof}

\begin{corollary}\label{cor:allmonotonic}
Veto by consumption, fractional veto, and voting by veto tokens are monotonic.
\end{corollary}

\section{Properties of specific core-consistent rules}\label{app:specificrules}

\begin{restatable}{proposition}{IterativeCoreExpectedSize}
 For each profile with $n$ voters and $m$ candidates, the size of iterative core does not exceed the minimum of $n$ and~$m$.
 
 With the Impartial Culture assumption and a fixed number of candidates $m$, as $n\ria\infty$, the expected size of iterative core tends to~1.  
\end{restatable}
\begin{proof}
The first statement follows from the fact that if the number of candidates exceeds the number of voters, then each voter's least preferred candidate is blocked.

The second statement we prove as follows. Fix $m$ and a subset of candidates $A$, where $2\leq |A|\leq m$. The size of the veto core is $|A|$ (i.e., no candidate is blocked) in a profile restricted to $A$ only if each candidate has $n/|A|$ last positions, the probability of which tends to 0 as $n\ria\infty$ by the de Moivre-Laplace Theorem. Then the probability that the size of veto core is $|A|$ in at least one of $2^m-1-m$ such subsets also tends to~$0$. Hence, the size of iterative core is 1 with probability $P_n$, which tends to~1. The expected size of iterative core does not exceed $P_n+m(1-P_n)$, which tends to~1.
\end{proof}

\begin{proposition}
    $\CR_{\textup{Lex}}$ does not satisfy negative involvement. 
\end{proposition}
\begin{proof}
    Consider the following voters:
\begin{itemize}
    \item $d\succ_1 b\succ_1 c\succ_1 a$
    \item $c\succ_2 d\succ_2 b\succ_2 a$
    \item $d\succ_3 a\succ_3 c\succ_3 b$
    \item $b\succ_4 a\succ_4 c\succ_4 d$
    \item $a\succ_5 d\succ_5 b\succ_5 c$
\end{itemize}
In the profile consisting of voters 1--4 the veto core is $\set{b,c,d}$, and $b$ is the $\CR_\text{Lex}$ winner. If we add voter 5, the veto core becomes $\set{c,d}$ and the $\CR_\text{Lex}$ winner is $c$, which violates negative involvement since voter 5 ranks $c$ last.
\end{proof}

\begin{proposition}
$\CR_{\textup{GA}}$ does not satisfy negative involvement. 
\end{proposition}
\begin{proof}
Consider the following profile:
\begin{itemize}
    \item 4 voters $c\succ b\succ a\succ d$.
    \item 2 voters $a\succ c\succ b\succ d$.
    \item 2 voters $b\succ d\succ a\succ c$.
    \item 1 voter $d\succ c\succ a\succ b$.
    \item 1 voter $d\succ a\succ b\succ c$.
    \item 1 voter $d\succ c\succ b\succ a$.
    \item 1 voter $c\succ a\succ b\succ d$.
\end{itemize}
It takes four voters for a veto power of one, seven for a veto power of two, ten for a veto power of three. Candidate $d$ is blocked, the core is $\set{a,b,c}$.

Candidate $c$ has more last places, $a$ and $b$ have the same number of last places, but $b$ has less second-to-last places, so $b$ wins.

Add a voter of type $cbda$. It still takes ten voters to have a veto power of three, so the 4 voters of type $cbad$, 2 voters $acbd$, 1 voter $dcab$, 1 voter $dcba$, 1 voter $cabd$, and the new voter $cbda$ block $b$ in favour of~$c$. The core is $\set{a,c}$, but $a$ has less last places than $c$ and wins.
\end{proof}

\begin{definition}
\emph{Up-front tokens} (UFT) is voting by veto tokens where $r,t$ return the smallest integers satisfying the provisos of Lemma~\ref{lem:moulin}, and $L(r(m,n),n)$ be the order that places all $r(m,n)$ tokens of voter 1 first, followed by all $r(m,n)$ tokens of voter 2, and so on (see \autoref{ex:tokens}).
\end{definition}

\begin{proposition}
Up-front tokens is not positive responsive.
\end{proposition}
\begin{proof}
Consider the profile from \autoref{ex:tokens}:
\begin{itemize}
\item $c\succ_1 a\succ_1 b$.
\item $a\succ_2 b\succ_2 c$.
\item $b\succ_3 a\succ_3 c$.
\end{itemize}
At the end of the procedure $a$ has 2 clones remaining and $b$ has 1, so the outcome is a tie between $a$ and $b$. If voter 1 were to change his type to $acb$ he would still use all his veto tokens on $b$, and the outcome would be the same.
\end{proof}

\begin{proposition}
Up-front tokens does not satisfy independence of unanimous losers.
\end{proposition}
\begin{proof}
Consider the profile consisting of voter one of type $ab$ and voter two of type $ba$. The procedure will make 5 copies of every candidate and give 4 veto tokens to each voter, so the outcome is the tie $\set{a,b}$. Now add the unanimous loser $c$ to obtain voter one of type $abc$ and voter two of type $bac$. Now $t(3,2) = 3$ and $r(3,2)=4$. Voter one will eliminate all copies of $c$ and one copy of $b$, while voter two will eliminate all copies of $a$ and hence $b$ will win.
\end{proof}

\begin{proposition}
Up-front tokens does not satisfy positive involvement.
\end{proposition}
\begin{proof}
Consider the voters:
\begin{itemize}
\item $a\succ_1 b\succ_1 c$.
\item $b\succ_2 a\succ_2 c$.
\item $c\succ_3 b\succ_3 a$.
\item $a\succ_4 b\succ_4 c$.
\end{itemize}
Suppose we have a profile with voters 2--4. In this case $r(3,3) = 9$ and $t(3,3)=10$, so at the end of voting $a$ has one clone and $b$ has two clones remaining, the outcome is $\set{a,b}$.

Now consider the profile with all four voters. Now $r(3,4)=5$, $t(3,4) = 7$. After voting $b$ has one clone remaining, and the outcome is $\set{b}$, which violates positive involvement since voter 1 ranks $a$ first. 
\end{proof}

\begin{proposition}
Up-front tokens does not satisfy voter adaptability.
\end{proposition}
\begin{proof}
Consider the voters:
\begin{itemize}
\item $a\succ_1 c\succ_1 d \succ_1 b$.
\item $d\succ_2 c\succ_2 a \succ_2 b$.
\item $a\succ_3 b\succ_3 c \succ_3 d$.
\item $a\succ_4 b\succ_4 c \succ_3 d$.
\end{itemize}
In the profile consisting of voters 2--4, $r(4,3)=5, t(4,3)=4$ and $a$ is the unique winner. If we add voter 1, $r(4,4)=16$, $t(4,4)=17$. Voter one vetoes 16 clones of $b$, voter two 1 clone of $b$ and 15 clones of $a$, voter three 16 clones of $d$, voter four 1 clone of $d$ and 15 clones of $c$. In the end $a$ and $c$ have two clones each, and the outcome is $\set{a,c}$.
\end{proof}

\bibliographystyle{apalike}
\bibliography{arxiv}
\end{document}